\documentclass[11pt]{article}

\usepackage[margin=1in]{geometry}

\usepackage{hyperref}
\usepackage{appendix}

\usepackage{graphicx,epsfig,amsmath,latexsym,amssymb,verbatim}

\newcommand{\extra}[1]{}


\usepackage{amsmath,amssymb,amsfonts}
\usepackage{amsthm}

\newtheorem{theorem}{Theorem}[section]

\newtheorem{lemma}[theorem]{Lemma}

\theoremstyle{remark}


\def\squareforqed{\hbox{\rlap{$\sqcap$}$\sqcup$}}
\def\qed{\ifmmode\squareforqed\else{\unskip\nobreak\hfil
\penalty50\hskip1em\null\nobreak\hfil\squareforqed
\parfillskip=0pt\finalhyphendemerits=0\endgraf}\fi}
\def\endenv{\ifmmode\;\else{\unskip\nobreak\hfil
\penalty50\hskip1em\null\nobreak\hfil\;
\parfillskip=0pt\finalhyphendemerits=0\endgraf}\fi}

\renewenvironment{proof}{\noindent \textbf{{Proof~} }}{\qed\medskip}
\newenvironment{proof+}[1]{\noindent \textbf{{Proof #1~} }}{\qed\medskip}

\mathchardef\ordinarycolon\mathcode`\:
\mathcode`\:=\string"8000
\def\vcentcolon{\mathrel{\mathop\ordinarycolon}}
\begingroup \catcode`\:=\active
  \lowercase{\endgroup
  \let :\vcentcolon
  }


\newcommand{\nc}{\newcommand}
\nc{\rnc}{\renewcommand}
\nc{\beq}{\begin{equation}}
\nc{\eeq}{{\end{equation}}}
\nc{\beqa}{\begin{eqnarray}}
\nc{\eeqa}{\end{eqnarray}}
\nc{\lbar}[1]{\overline{#1}}
\nc{\bra}[1]{\langle#1|}
\nc{\ket}[1]{|#1\rangle}
\nc{\ketbra}[2]{|#1\rangle\!\langle#2|}
\nc{\braket}[2]{\langle#1|#2\rangle}
\nc{\proj}[1]{| #1\rangle\!\langle #1 |}
\nc{\avg}[1]{\langle#1\rangle}
\nc{\smfrac}[2]{\mbox{$\frac{#1}{#2}$}}
\nc{\tr}{\operatorname{tr}}
\nc{\tracedist}[1]{\Delta_{}\!\left( #1 \right)}
\nc{\fid}[1]{F\!\left( #1 \right)}

\nc{\ox}{\otimes}
\nc{\dg}{\dagger}
\nc{\dn}{\downarrow}
\nc{\cA}{{\cal A}}
\nc{\cB}{{\cal B}}
\nc{\cC}{{\cal C}}
\nc{\cD}{{\cal D}}
\nc{\cE}{{\mathcal E}}
\nc{\cF}{{\cal F}}
\nc{\cG}{{\cal G}}
\nc{\cH}{{\cal H}}
\nc{\cI}{{\cal I}}
\nc{\cJ}{{\cal J}}
\nc{\cK}{{\cal K}}
\nc{\cL}{{\cal L}}
\nc{\cM}{{\cal M}}
\nc{\cN}{{\cal N}}
\nc{\cO}{{\cal O}}
\nc{\cP}{{\cal P}}
\nc{\cR}{{\cal R}}
\nc{\cS}{{\cal S}}
\nc{\cT}{{\cal T}}
\nc{\cU}{{\cal U}}
\nc{\cX}{{\cal X}}
\nc{\cY}{{\cal Y}}
\nc{\cZ}{{\cal Z}}

\nc{\entI}{{\bf I}}
\nc{\entIarrow}{{\bf I}^{\leftarrow}}
\nc{\entH}{{\bf H}}
\nc{\entS}{{\bf S}}
\nc{\entHmin}{\mathbf{H}_{\min}}
\nc{\entHmax}{\mathbf{H}_{\max}}

\nc{\binent}{h_2}

\nc{\entF}{{\bf E}_f}

\nc{\isom}{\simeq}

\nc{\rank}{\operatorname{rank}}
\nc{\rar}{\rightarrow}
\nc{\lrar}{\longrightarrow}
\nc{\polylog}{\operatorname{polylog}}
\nc{\poly}{\operatorname{poly}}
\nc{\1}{\mathbb{I}}

\nc{\weight}{\textbf{w}}
\nc{\hamdist}{d_{H}}

\def\p{\pi}

\nc{\Sp}{{{\mathbb S}}}
\nc{\RR}{{{\mathbb R}}}
\nc{\CC}{{{\mathbb C}}}
\nc{\FF}{{{\mathbb F}}}
\nc{\NN}{{{\mathbb N}}}
\nc{\ZZ}{{{\mathbb Z}}}
\nc{\PP}{{{\mathbb P}}}
\nc{\QQ}{{{\mathbb Q}}}
\nc{\UU}{{{\mathbb U}}}
\nc{\OO}{{{\mathbb O}}}
\nc{\EE}{{{\mathbb E}}}
\nc{\id}{{\operatorname{id}}}

\nc{\qubitchannel}{\id_2}
\nc{\bitchannel}{\overline{\id}_2}

\nc{\be}{\begin{equation}}
\nc{\ee}{{\end{equation}}}
\nc{\bea}{\begin{eqnarray}}
\nc{\eea}{\end{eqnarray}}
\nc{\<}{\langle}
\rnc{\>}{\rangle}
\nc{\Hom}[2]{\mbox{Hom}(\CC^{#1},\CC^{#2})}
\nc{\rU}{\mbox{U}}

\nc{\ob}[1]{#1}


\nc{\unif}{\textrm{unif}}

\nc{\inter}{\textrm{int}}
\nc{\ed}{\textrm{ed}}

\nc{\grade}{\mathsf{G}}

\nc{\pguess}{P_{guess}}

\nc{\barA}{\overline{A}}
\nc{\barB}{\overline{B}}
\nc{\barC}{\overline{C}}
\nc{\barD}{\overline{D}}
\nc{\barR}{\overline{R}}
\nc{\barX}{\overline{X}}
\nc{\barY}{\overline{Y}}
\nc{\barU}{\overline{U}}

\nc{\barrho}{\overline{\rho}}

\nc{\barp}{\overline{p}}

\nc{\Pos}{\mathrm{Pos}}

\def\cK{{\mathcal K}}
\def\cU{{\mathcal U}}

\newcommand{\real}{\mathbb{R}}

\newcommand{\gor}{\rightarrow}

\newcommand{\mrg}{\mathrm{marg}}
\newcommand\numberthis{\addtocounter{equation}{1}\tag{\theequation}}

\newcommand{\Supp}{\textrm{Supp}}



\begin{document}


\title{Computational Aspects of Private Bayesian Persuasion} 
\author{Yakov Babichenko\thanks{Technion - Israel Institute of Technology. {\tt yakovbab@tx.technion.ac.il} } \and Siddharth Barman\thanks{Indian Institute of Science. {\tt barman@csa.iisc.ernet.in}}}
\date{}

\maketitle

\begin{abstract}
We study computational questions in a game-theoretic model that, in particular, aims to capture advertising/persuasion applications such as viral marketing. Specifically, we consider a multi-agent Bayesian persuasion model where an informed sender (marketer) tries to persuade a group of agents (consumers) to adopt a certain product. The quality of the product is known to the sender, but it is unknown to the agents. The sender is allowed to commit to a signaling policy where she sends a private signal---say, a viral marketing ad---to every agent. This work studies the computation aspects of finding a signaling policy that maximizes the sender's revenue. 

We show that if the sender's utility is a submodular function of the set of agents that adopt the product, then we can efficiently find a signaling policy whose revenue is at least $(1-1/e)$ times the optimal. We also prove that for submodular utilities, approximating the sender's optimal revenue by a factor better than $(1-1/e)$ is {\rm NP}-hard and, hence, the developed approximation guarantee is essentially tight. When the sender's utility is a function of the number of agents that adopt the product (i.e., the utility function is anonymous), we show that an optimal signaling policy can be computed in polynomial time. Our results are based on an interesting connection between the Bayesian persuasion problem and the evaluation of the concave closure of a set function. 

\end{abstract}

\section{Introduction}
The advent of social networks has fundamentally changed the landscape of advertising. In recent years much attention has been devoted to new forms of advertising such as viral marketing. The goal of these marketing strategies is to advertise a product by leveraging the spread of influence between consumers in a social network. To accomplish this objective, viral marketing aims to identify ``opinion leaders'' in the network who can start a cascade of influence leading to large adoption of the advertised product. 

Motivated by such marketing applications, the notable work \cite{kk} addresses the algorithmic problem of determining opinion leaders (influential individuals) in a social network; also see~\cite{do}. In~\cite{kk} the process/dynamic via which influence spreads through a social network---i.e., the extent to which people in the network influence each other---is part of the input. More formally, given a network, a dynamic that spreads influence in favor of a product through the network, and a number $n$,~\cite{kk} address the following question: which initial set of $n$ consumers (influenced, say, by giving free samples) maximize the final adoption of the product, after the influence has been spread by the underlying dynamic? Note that this model assumes that the marketer always succeeds in persuading the initial $n$ consumers to adopt the product. 

In order to better capture viral marketing, we can complement the work of~\cite{kk} by considering the following real-world aspects. The marketer might be unsuccessful in persuading a consumer to adopt a product; furthermore, some consumers might be harder to persuade than others. Another factor that should be taken into account in this scenario is the asymmetry in information; in particular, it is reasonable to assume that the marketer has more information about product's quality than the consumer. In this paper we focus on a model that is motivated by these observations, and we ask a question complementary to the one addressed in~\cite{kk}: If a marketer tries to persuade $n$ consumers to adopt a product, what is her best strategy for doing so?
  
We consider a situation where a marketer, henceforth \emph{sender}, tries to persuade potential consumers, henceforth \emph{agents}, to adopt a certain product. The asymmetry in the information can be captured by a Bayesian model where the sender knows the state of nature (for instance, whether the quality of the product is \emph{high} or \emph{low}), and agents do not. This state of nature might effect each agent's utility. The sender is allowed to commit to a (information) revelation policy regarding the state of nature\footnote{Without the possibility to \emph{commit} to a policy, the described model will correspond to a \emph{cheap talk model} (see \cite{cs}), wherein it is known that a sender's messages are not reliable for the case at hand and, therefore, the sender has no persuasion power.\\ The assumption that sender is allowed to commit can be motivated by a \emph{repeated} interaction where agents may detect deviations of the sender from the committed strategy, see \cite{rs} for a discussion on this issue.} before the actual state of nature is realized. Agent's information, that has been received from sender, may effect her decision on whether to adopt the product. Thus, such partial revelation of information may increase sender's profit.

The above described model is called \emph{Bayesian persuasion}, and has been extensively studied in recent years in contexts such as voting~\cite{ca,sch}, regulation policies~\cite{go,tam}, and marketing~\cite{br,and}. A central thread of research in  Bayesian persuasion has been the study of the optimal revenue of the sender and the optimal policy to persuade agents. Hence, considering Bayesian persuasion with a computational lens raises the following natural questions: Can we find the sender's optimal policy in polynomial time? If not, then what  fraction of the optimal revenue can be guaranteed by policies that are efficiently computable? These questions form the basis of our work. 


Bayesian persuasion and related algorithmic questions can be considered in two broad settings. The first setting considers the setup wherein the sender announces a \emph{public} signal to the agents, see, e.g., \cite{sch,ca}. The other setting models scenarios where the sender has a private communication channel with each agent, and is allowed to send a private signal---say, a viral marketing ad---to each agent; see, e.g.,~\cite{AB,taneva,wan}. In this paper we focus on latter setting. 

\noindent 
{\bf Our Results and Techniques:} Note that in a multiagent model the sender's utility is a function of the \emph{subset} of agents that adopt the product. It has been observed that for various dynamics---via which influence spreads in a network---the utility of the sender (in the final adoption state, i.e., after the dynamic had been applied) is submodular, see~\cite{kk} and references therein. Thus, settings wherein the sender's utility is submodular is of central interest in multiagent Bayesian persuasion model. For submodular utility we develop a tight $(1-\frac{1}{e})$ approximation of the sender's optimal policy. In particular, we establish the existence of a polynomial algorithm that finds a policy whose revenue is at least $(1-\frac{1}{e})$ times the optimal. On the negative side, we show that it is {\rm NP}-hard to approximate the sender's optimal revenue by a factor better than $(1-\frac{1}{e})$. 
 
\emph{Anonymous} utilities constitute another interesting class of utility functions. Here, the sender's utility is a function of the \emph{number} of agents that adopt the product. We emphasize that agents' utility functions may different for different agents (i.e., it might be harder to persuade some agents than others). We show that for a sender with an anonymous utility function, an optimal policy and the optimal revenue can be computed in polynomial time.

Our proofs are based on an interesting connection between the Bayesian persuasion model and the \emph{concave closer of a set function}~\cite{mur,dug,vondrak2007submodularity}. In Section \ref{sec:pre}---specifically, in Lemma \ref{lem:aprox}---we show that computation (or approximation) of sender's optimal revenue with utility $V:2^{[n]} \rightarrow \mathbb{R}_+$ is (computationally) equivalent to evaluation (or approximation) of the concave closer of the function $V$, here $[n]$ is the set of agents. Concave closer has been studied in the context of submodular maximization, see, e.g., \cite{ccpv,vondrak2007submodularity}, where it is primarily used as a {technical} tool to obtain approximation results. Unlike \cite{ccpv,vondrak2007submodularity}, in the Bayesian persuasion problem the concave closer turns out to be the core object. 

We establish a tight approximation bound for computing the concave closer of monotone submodular functions. Specifically, we develop a $(1-\frac{1}{e})$-approximation algorithm for computing the concave closure of monotone submodular functions. Our approximation result rests on an approximation preserving reduction between computing the concave closure and the problem of maximizing a monotone submodular function subject to a matroid constraint. Since the latter problem admits a $(1-\frac{1}{e})$ approximation (see \cite{ccpv}) the desired result follows. For the hardness result, we use tools from~\cite{klmm} and~\cite{feige} which were developed to establish the hardness of approximating the maximum social welfare in combinatorial auctions. 

We establish the result for anonymous utility functions (which are not necessarily submodular) by developing a polynomial-time algorithm for computing the concave closer of anonymous functions. Such a polynomial-time algorithm might be of independent interest (beyond the Bayesian persuasion framework), since the concave closer is a basic notion in discrete convex analysis (see e.g.,~\cite{dca}). 

Throughout the paper we consider a model wherein there are exactly two possible states of nature ($|\Omega| = 2$); such cases capture settings where binary distinctions (e.g., the quality of the product is \emph{high} or \emph{low}) suffice. In general, though, we need to address environments where there are multiple states of nature and extending our results to handle models in which $|\Omega| > 2$ is an interesting direction for future work. 


\subsection{Additional Related Work} 
The current literature on Bayesian persuasion starts with the result of~\cite{kg} who---building upon the classical work by~\cite{am}---analyze the case of a single sender and a single receiver. Several extensions of this model appear in recent papers, see, e.g., \cite{AC,GK,GK2}. One natural extension, considers the scenario where there are multiple receivers. The setting wherein the sender is only allowed to send a public signal to the receivers is considered in \cite{ca,sch}. The complementary setting in which the sender is allowed to send private signals to the receivers has been studied in \cite{AB,taneva,wan}. Our result is most closely related to the work of~\cite{AB} where the optimal policy and the optimal revenue are characterized for supermodular utilities, supermajority utilities and submodular utilities which are also anonymous. In particular, we build upon the work of~\cite{AB} with a computation perspective. Specifically, we show that for \emph{every} anonymous utility the optimal revenue can be computed in polynomial time and provide tight approximations for general submodular utilities (not necessarily anonymous). Our inapproximability result for submodular functions (Theorem \ref{thm:hard}) indicates that a closed-form expression for sender's revenue that is derived in \cite{AB} for the submodular anonymous case (and also the supermodular case) cannot have an analogous closed formula for general submodular functions.

A recent, interesting paper by~\cite{DX} studies the complexity of the Bayesian persuasion problem in the case of single agent single receiver with $n$ actions to the receiver and $exp(n)$ states of nature. \cite{DX} prove that in the case where payoff profiles are i.i.d. distributed (for all receiver's actions) the problem can be solved in polynomial time. The same is not true if payoff profiles are independently distributed (but not identically)-- the problem becomes $\#$P-hard. Finally, they show that the general problem (with arbitrary payoff profiles) can be approximately solved efficiently in the query model, if we assume that receiver follows the recommended action by the sender in all cases where no $\varepsilon$-better action (according to his belief) exists.\footnote{This notion is called $\varepsilon$-incentive compatibility.} 

\section{Notations and Preliminaries}\label{sec:pre}

We consider a Bayesian persuasion model with a sender and $n$ agents, $[n]=\{1,2,...,n\}$. Write $\Omega=\{\omega_0,\omega_1\}$ to denote the two possible states of nature. 
Each agent $i \in [n]$ has two actions, $\{0,1\}$, and a utility function, $u_i$, that depends on the state of nature and its own action, $u_i: \Omega \times \{0,1\}\gor \real$. All agents share a common prior distribution, where $0<\gamma<1$ is the probability of state $\omega_1$, and $1-\gamma$ of state $\omega_0$. Note that even though the agents' utilities depend on the realized state of nature they are a priori unaware of it.   Throughout, we will use $\Delta(A)$ to denote the set of probability distributions over set $A$. 

\cite{AB} show that without loss of generality we can assume $u_i(\omega_0,0)>u_i(\omega_0,1)$ and $u_i(\omega_1,1)>u_i(\omega_1,0)$ for all agents $i\in [n]$. In particular, it is show in~\cite{AB} that we can efficiently reduce an instance with arbitrary utility function to an instance wherein agents prefer to adopt the product (i.e., play $1$) in state space $\omega_1$ and prefer not to adopt it (i.e., play $0$) in state $\omega_0$. Hence, throughout the paper we will work with this assumption on agents' utilities. 

As mentioned earlier, the sender's utility, $V$, depends on the set of agents that play action $1$ (i.e., the set of agents that adopt the product), $V: \{0,1\}^n\gor\real$. With a slight abuse of notation, for a subset $S \subseteq [n]$, we will use $V(S)$ to denote $V(1_S)$, where $1_S$ is the characteristic vector of subset $S$. We assume throughout that the sender's utility monotonically increases with the set of agents that play action $1$: $V(S) \leq V(T)$ for every $S\subseteq T$. 

Note that we have restricted our attention to the case wherein the sender's utility does not depend on $\Omega$. More generally, the sender's utility can be defined to be a function of the state space as well, i.e., we can have $V: \Omega \times \{0,1\}^n\gor\real$. It is shown in~\cite{AB} that such a general case can always be efficiently reduced to one in which $V(\omega_0,S)= V(\omega_1,S)$ for every $S$. Hence, throughout the paper we will focus on a utility functions, $V$, that are state independent.

As is typical is Bayesian persuasion models, we assume that only the sender knows the realized state. The agents remain unaware of it. Furthermore, following the model of Kamenica and Gentzkow \cite{kg} we allow the sender to commit in advance to an information revelation policy. In this work, however, we allow the sender to reveal the information to every agent privately. This translates to a state dependent signaling distribution. Formally, a policy of the informed sender consists of $n$ finite sets $\{\Theta_i\}_{i=1,\ldots,n}$, where $\Theta_i$ is the private signal set of agent $i$, and a mapping $F: \Omega \gor \Delta(\Theta_1\times\cdots\times \Theta_n)$. The sender can commit to a policy $F$ that is known to the agents prior to stage where the state $\omega$ is realized.  

The sequence  of the interaction between the sender and the agents is as follows. First, the sender commits to a signaling policy $F$. Then, a state $\omega\in\Omega$ is realized in accordance with the prior $(\gamma,1-\gamma)$. After that a profile of signals $\theta=(\theta_1,\ldots,\theta_n)$ is generated according to the distribution $F(\omega)$. Every agent $i$ observes her private signal realization $\theta_i\in \Theta_i$ and forms a posterior 
$\p_F(\omega_1|\theta_i)=p(\theta_i)$. 

With the posterior in hand, agent $i$ selects an action that maximizes her expected utility. In other worlds, agent $i$ plays action $1$ if and only if
\begin{align*}
p(\theta_i)u_i(\omega_1,1)+(1-p(\theta_i))u_i(\omega_0,1)\geq p(\theta_i)u_i(\omega_1,0)+(1-p(\theta_i))u_i(\omega_0,0).
\end{align*}
We assume that in case of indifference agents plays action $1$. Let $g_i(\theta_i)\in \{0,1\}$ denote agent $i$'s best-reply action when she observes the signal $\theta_i$. Also, write $g(\theta)$ to be the action profile of the agents when the realized vector of signals is $\theta$. Write $\Theta :=\Theta_1\times\cdots\times \Theta_n$. We will use $F_1\in\Delta(\Theta)$ to denote the signal distribution conditional on state $\omega_1$ and $F_0\in\Delta(\Theta)$ to denote the signal distribution conditional on state $\omega_0$. 

Let $s(F)$ be the sender's utility from the policy $(\Theta,F)$:
\begin{equation}\label{eq:rev}
s(F):=\gamma \mathbb{E}_{\theta \sim F_1} [V(g(\theta))]+(1-\gamma)\mathbb{E}_{\theta \sim F_0} [V(g(\theta))].
\end{equation}

A signaling policy $(\Theta,F)$ is said to be \emph{optimal} if it maximizes sender's utility among all possible signal sets $\Theta$ and all possible signals $F:\Omega \rightarrow \Delta(\Theta)$. 

We begin by stating a result from~\cite{AB} (see Lemma 1 in~\cite{AB}) that shows the existence of an optimal policy with the following useful properties:
\begin{itemize}
\item For every agent $i$, the private signal set $\Theta_i$ is equal to $\{0,1\}$ and agent $i$'s best reply $g_i(\theta_i)=\theta_i$. In other words, when signal $\theta_i$ is recommended by the sender to agent $i$ it is profitable (after agent $i$ performs a Bayesian update of her belief on the state of the world) for her to follow the recommendation. In~\cite{kg}, such policies are called \emph{straightforward}.
\item In the optimal policy $F_1(1,1,\ldots,1)=1$, i.e., when state $\omega_1$ is realized the sender recommends everyone to adopt the product. Recall that $F_1$ is a distribution over $\Theta$, which (by the above mentioned property) is $\{0,1\}^n$ for the optimal policy under consideration. 
\item When the realized state is $\omega_0$, the sender recommends to agent $i$ to adopt the product with probability of at most $a_i := \min(\frac{\gamma}{1-\gamma}\frac{u_i(\omega_1,1)-u_i(\omega_1,0)}{u_i(\omega_0,0)-u_i(\omega_0,1)},1)$. Write $F_0(\theta_i = 1) := \sum_{ \theta \in \{0,1\}^n: \ \theta_i = 1 } F_0(\theta)$. We succinctly express this condition as $F_0(\theta_i=1)\leq a_i$. The number $a_i$ can be interpreted as the maximal probability that the sender can ``lie'' to the agent, and will be called the \emph{persuasion level of player $i$.} 
\end{itemize}
Under such an optimal policy, sender's utility is given by
\begin{align}\label{eq:maxi0}
s(F)=\gamma V([n]) + (1-\gamma) \mathbb{E}_{\theta \sim F_0} V(\theta)
\end{align} 

It is shown in~\cite{AB} that the requirements $F_1(1,1,\ldots,1)=1$ and $F_0(\theta_i=1)\leq a_i$ in fact ensure that the best reply of every agent is equal to the sender's recommendation, $g_i(\theta_i)=\theta_i$. Also, the above mentioned properties imply that we can restrict our attention to binary signal sets, $\Theta_i =\{ 0, 1\}$ and $\Theta = \{0,1\}^n$ along with a single distribution $F_0 \in \Delta(\{0,1\}^n)$. 

Overall, in light of these properties the problem of determining an optimal policy (over general signal sets $\Theta$ and mappings $F:\Omega\rightarrow \Delta(\Theta)$)  reduces to the following well-structured maximization problem: 
\begin{align}\label{eq:maxi1}
\textrm{maximize }  \ \ & \mathbb{E}_{\theta \sim F_0} V(\theta) \ \text{ subject to } \ F_0(\theta_i=1)\leq a_i.
\end{align}

Note that prior $\gamma$ and utility $V([n])$ are fixed parameters. Hence, an optimal solution of \eqref{eq:maxi1} gives us the optimal value in \eqref{eq:maxi0}

For each subset $S\subset [n]$, with characteristic vector $1_S$, write $\mu_S$ to be the probability that exactly the agents is $S$ will receive the recommendation to adopt the product,  $\mu_S:=F_0(1_{S})$. For a given persuasion levels profile $a :=(a_1,...,a_n)$, the maximization problem \eqref{eq:maxi1} can be written as
\begin{align}\label{eq:maxi2}
V^+(a) := \ & \max \sum_{S \subseteq [n]} \mu_S V(S) \\ & \text{ s.t. } \ \ \sum_{S \subset [n]}\mu_S 1_S\leq a \nonumber \\ & \qquad \ \ \sum_{S \subset [n]}\mu_S =1 \nonumber \\ & \qquad \ \ \mu_S\geq 0. \nonumber
\end{align}

Interestingly, the expression $V^+(a)$ exactly defines the notion of \emph{concave closure} (see e.g., \cite{do}) of the sender's utility function $V$ at $a \in [0,1]^n$. We will refer to solving (approximating) the optimization problem \eqref{eq:maxi2} as computing (approximating) the concave closure. 
Note that computing the concave closer corresponds to solving a linear programming with polynomial number of constrains, but with an exponential number of variables (the variables are $\mu_S$ for every $S \subset [n]$). 

In some cases we will be interested in {approximating the optimal revenue of the sender} and, hence, we introduce here the following lemma that states that computing (approximating)  the concave closure is computationally equivalent to computing (approximating) the sender's optimal revenue. Note that, for a given parameter $\alpha \in (0,1]$, an $\alpha$ approximation of the concave closure corresponds to a distribution $\{ \mu_S\}_{S \subseteq [n]}$ that satisfies the feasibility constraints of the optimization problem \eqref{eq:maxi2} and obtains an objective function value, $\sum_{S \subseteq [n]} \mu_S V(S)$, that is at least $\alpha$ times the optimal. The next lemma states that there exists an approximation-preserving, polynomial-time reduction between computing the concave closure and finding the optimal revenue of the sender. Specifically, the lemma establishes that computing the concave closure of the sender's utility function lies at the core of determining a revenue-maximizing policy for the sender.
 
\begin{lemma}\label{lem:aprox}
Given a persuasion profile $a \in [0,1]^n$ and utility function $V$ along with an $\alpha$ approximation of the concave closer $V^+(a)$, in polynomial time we can find a policy for the sender (with utility function $V$ and persuasion profile $a$) that obtains revenue at lest $\alpha$ the optimal. 

Furthermore, for every $\varepsilon>0$, there exists a polynomial-time reduction from the problem of $\alpha$ approximating a sender's revenue (with utility function $V$ and persuasion profile $a$)  to the problem of computing an $(\alpha+\varepsilon)$ approximation of the concave closer $V^+(a)$.
\end{lemma}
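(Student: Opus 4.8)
The plan is to leverage the structural characterization of \cite{AB} recalled above, which pins the sender's optimal revenue to
\[
\mathrm{OPT}\;=\;\gamma\,V([n]) + (1-\gamma)\,V^{+}(a).
\]
Indeed, by Lemma~1 of \cite{AB} some optimal policy is straightforward with binary signal sets, has $F_1$ equal to the point mass on $(1,\ldots,1)$, and has $F_0$ satisfying $F_0(\theta_i=1)\le a_i$; for any such policy $g(\theta)=\theta$, so \eqref{eq:rev} collapses to $s(F)=\gamma V([n])+(1-\gamma)\,\mathbb{E}_{\theta\sim F_0}V(\theta)$, and maximizing the last term over feasible $F_0$ is exactly the linear program \eqref{eq:maxi2} defining $V^{+}(a)$. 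Given this identity, a point of the feasible polytope of \eqref{eq:maxi2} with near-optimal objective value is essentially the same object as a near-optimal policy, and both halves of the lemma just make this translation precise.

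For the first assertion, an $\alpha$-approximation of $V^{+}(a)$ is (as defined above) a distribution $\{\mu_S\}_{S\subseteq[n]}$ feasible for \eqref{eq:maxi2} with $\sum_S\mu_S V(S)\ge\alpha\,V^{+}(a)$. I would output the policy with $\Theta_i=\{0,1\}$, with $F_1$ the point mass on $(1,\ldots,1)$, and with $F_0$ assigning probability $\mu_S$ to the signal profile $1_S$. The feasibility constraint $\sum_S\mu_S 1_S\le a$ is exactly $F_0(\theta_i=1)=\sum_{S\ni i}\mu_S\le a_i$ for every $i$, and $F_1(1,\ldots,1)=1$, so by \cite{AB} this is a valid straightforward policy, and by the display above its revenue is $s(F)=\gamma V([n])+(1-\gamma)\sum_S\mu_S V(S)$. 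Using the objective guarantee together with $\alpha\le 1$ and $V\ge 0$,
\[
s(F)\;\ge\;\gamma V([n])+(1-\gamma)\alpha\,V^{+}(a)\;\ge\;\alpha\left(\gamma V([n])+(1-\gamma)V^{+}(a)\right)\;=\;\alpha\cdot\mathrm{OPT}.
\]
The construction is polynomial time; if the supplied $\mu$ happens to have large support one first re-solves \eqref{eq:maxi2} under the extra constraint $\sum_S\mu_S V(S)\ge\alpha V^{+}(a)$ to obtain a basic, hence $(n{+}1)$-sparse, feasible solution of no smaller value.

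For the second assertion, on input a revenue instance $(V,a,\gamma)$ the reduction calls the assumed $(\alpha+\varepsilon)$-approximation algorithm for the concave closure on $(V,a)$, gets a feasible $\{\mu_S\}$ with $\sum_S\mu_S V(S)\ge(\alpha+\varepsilon)V^{+}(a)$, and outputs the policy built from $\mu$ exactly as above; its revenue is $\gamma V([n])+(1-\gamma)(\alpha+\varepsilon)V^{+}(a)$, and
\[
\gamma V([n])+(1-\gamma)(\alpha+\varepsilon)V^{+}(a)-\alpha\cdot\mathrm{OPT}\;=\;(1-\alpha)\gamma V([n])+\varepsilon(1-\gamma)V^{+}(a)\;\ge\;0,
\]
so the output is an $\alpha$-approximation of the optimal revenue, the $\varepsilon$ slack being absorbed because the additive term $\gamma V([n])$ of the revenue is recovered exactly. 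The only nonroutine ingredient---and the step I would be most careful about---is the identity for $\mathrm{OPT}$: it rests on the \cite{AB} reduction to straightforward policies with $F_1=\delta_{(1,\ldots,1)}$ and $F_0(\theta_i=1)\le a_i$ (so that optimizing over \emph{all} signaling policies really does coincide with \eqref{eq:maxi2}), and on nonnegativity and monotonicity of $V$, which are exactly what let the approximation factor survive the additive split of $s(F)$; everything after that is bookkeeping between the polytope of \eqref{eq:maxi2} and the space of policies.
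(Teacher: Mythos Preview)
Your argument for the first assertion is correct and is essentially the paper's proof: use the identity $\mathrm{OPT}=\gamma V([n])+(1-\gamma)V^{+}(a)$, plug the feasible $\mu$ in as $F_0$, and observe that $\gamma V([n])\ge \alpha\,\gamma V([n])$ lets the factor $\alpha$ pass through the sum.

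For the second assertion, however, you have proved the wrong direction. What you show is that an $(\alpha+\varepsilon)$-approximation of $V^{+}(a)$ yields an $\alpha$-approximation of the revenue; but this is an immediate weakening of the first assertion (apply it with $\alpha':=\alpha+\varepsilon$) and adds no content. The lemma's phrasing is admittedly awkward, but the preamble (``approximation-preserving reduction \emph{between}'') and, decisively, the paper's own proof make clear that the second part is meant to be the \emph{converse}: from an $\alpha$-approximation of the sender's revenue one can extract an $(\alpha\pm\varepsilon)$-approximation of $V^{+}(a)$. This is the direction that makes the two problems computationally equivalent and that lets the concave-closure hardness of Theorem~\ref{thm:hard} transfer to the persuasion problem.

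The paper's argument for this converse is different from anything in your write-up: given $V$ and $a$, it \emph{constructs} a persuasion instance by choosing $\gamma$ very small and setting $u_i(\omega_0,0)=1$, $u_i(\omega_0,1)=u_i(\omega_1,0)=0$, $u_i(\omega_1,1)=a_i(1-\gamma)/\gamma$, which forces the persuasion levels to equal exactly $a$. For this instance $\mathrm{OPT}=\gamma V([n])+(1-\gamma)V^{+}(a)$, and because $\gamma$ was chosen negligibly small the additive term $\gamma V([n])$ contributes only $O(\varepsilon)$; hence any policy achieving revenue $\ge\alpha\cdot\mathrm{OPT}$ has an $F_0$ with $\mathbb{E}_{F_0}V(\theta)$ within $\varepsilon$ of $\alpha\,V^{+}(a)$. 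The key point you are missing is that in this direction one is \emph{free to choose} $\gamma$ (and the $u_i$'s) when handed only $(V,a)$, and it is precisely this freedom that makes the $\gamma V([n])$ term disappear.
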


\begin{proof}
The forward direction is direct: by equation \eqref{eq:maxi0}, an $\alpha$ approximation of $\max \mathbb{E}_{\theta \sim F_0} V(\theta)$ is also an $\alpha$ approximation of $\gamma V(N) + (1-\gamma) \mathbb{E}_{\theta \sim F_0} V(\theta)$.

For the other direction, given function $V$ and persuasion $a=(a_i)_{i\in [n]}$, we can set the prior $\gamma$ to be very small (e.g., $\gamma=\frac{\varepsilon (V(N))^2}{1-\alpha}$ suffices) and we set agent $i$ utilities to be
\begin{align*}
u_i(\omega_0,0)=1, \ u_i(\omega_0,1)=u_i(\omega_1,0)=0, \ u_i(\omega_1,1)=a_i \frac{1-\gamma}{\gamma}.
\end{align*}
Such a choice guarantees that indeed $a_i=\min(\frac{\gamma}{1-\gamma}\frac{u_i(\omega_1,1)-u_i(\omega_0,1)}{u_i(\omega_0,0)-u_i(\omega_0,1)},1)$. It follows that for such instances an $\alpha$ approximation of the sender's revenue implies $(\alpha+\varepsilon)$ approximation of the concave closer of $V$. 
\end{proof} 

In subsequent sections we establish algorithmic and hardness results for the problem of finding the optimal policy (and revenue) of the sender. We do so by using the above mentioned lemma and, in particular, addressing the computation of the concave closure.

\section{Anonymous Utility} \label{sect:anon}

This section considers the case wherein the sender's utility function is anonymous i.e., it satisfies $V(S)=f(|S|)$ for some monotonically increasing function $f:[n]\rightarrow \mathbb{R}$.
Our main result for anonymous utilities is as follows. 
\begin{theorem}\label{theo:ano}
There exists a polynomial algorithm for computing the maximum revenue and an optimal signaling policy for a sender that has a monotone, anonymous utility function.
\end{theorem}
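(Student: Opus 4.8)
By Lemma~\ref{lem:aprox}, it suffices to show that the concave closure $V^+(a)$ can be computed exactly in polynomial time when $V(S)=f(|S|)$ for a monotone $f$, together with a distribution $\{\mu_S\}$ achieving it. The plan is to exploit anonymity to collapse the exponentially many variables $\mu_S$ into polynomially many. First I would observe that, by symmetry of the objective, for any feasible $\{\mu_S\}$ we may replace it by its ``symmetrization'' over all subsets of the same size: this does not change $\sum_S \mu_S V(S)=\sum_k f(k)\Pr[|S|=k]$, and it only relaxes the marginal constraints since averaging the marginals keeps them below $\frac1n\sum_i a_i$ on each coordinate — but wait, the constraints $F_0(\theta_i=1)\le a_i$ are not symmetric when the $a_i$ differ, so pure symmetrization is too lossy. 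Instead the right move is to pass to the marginal/degree variables directly.

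The key step is to introduce variables $q_k := \Pr_{S\sim\mu}[|S|=k]$ for $k=0,1,\dots,n$ and $x_i := \Pr_{S\sim\mu}[i\in S] = F_0(\theta_i=1)$ for $i\in[n]$, and to characterize exactly which pairs $(q,x)$ are realizable by some distribution $\mu$ over subsets. The objective becomes the linear $\sum_k f(k) q_k$, and the constraints become $x_i\le a_i$, $\sum_k q_k=1$, $q_k\ge 0$, together with the consistency condition $\sum_i x_i = \sum_k k\, q_k$ and — crucially — whatever additional inequalities are needed for $(q,x)$ to come from an actual distribution on $2^{[n]}$. Here I expect to use a Gale–Ryser / flow-type argument: a distribution over subsets with prescribed size-distribution $q$ and prescribed marginals $x$ exists if and only if $q$ and $x$ are ``compatible'' in the sense that one can realize, for each $k$ in the support of $q$, a distribution over size-$k$ subsets whose marginals are some $x^{(k)}$ with $\sum_k q_k x^{(k)} = x$; and a distribution over size-$k$ subsets with marginals $y\in[0,1]^n$ exists iff $\sum_i y_i = k$ and $y_i \le 1$ (this is the Birkhoff–von Neumann-type fact that the convex hull of $0/1$ vectors of weight exactly $k$ is $\{y : 0\le y\le 1, \sum y_i = k\}$, decomposable in polynomial time). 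Pushing these through, the set of achievable $(q,x)$ with $x\le a$ is itself a polytope described by polynomially many linear constraints, so $V^+(a)$ is the optimum of a polynomial-size LP.

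Concretely, I would write the LP in the variables $q_k$ alone after eliminating $x$: feasibility of some $x\le a$ with $\sum_i x_i = \sum_k k q_k$ and $x\ge 0$ forces only $0 \le \sum_k k q_k \le \sum_i a_i$ and, because each $x_i\le a_i$ individually while $x_i$ can be as large as $\min(a_i,1)$ (and anonymity of $f$ means we never benefit from unequal $x_i$ beyond their sum), the single scalar constraint $\sum_k k\,q_k \le \sum_{i} a_i$ — more precisely $\sum_k k q_k \le \sum_i \min(a_i,1)$, and since $a_i\le 1$ this is $\sum_i a_i$ — captures feasibility once we also respect that at most $|\{i: a_i>0\}|$ coordinates can be ``used.'' So the reduced program is: maximize $\sum_{k=0}^n f(k) q_k$ subject to $q_k\ge 0$, $\sum_k q_k = 1$, $\sum_k k q_k \le A$ where $A:=\sum_i a_i$ (capped appropriately at $n$). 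This is an LP with $n+1$ variables and $O(n)$ constraints, solvable exactly in polynomial time; in fact its optimum is attained at a vertex supported on at most two consecutive values of $k$, which also recovers the closed-form revenue formula of~\cite{AB} in the anonymous submodular case. Finally, from the optimal $q^*$ I would reconstruct an explicit $\mu$: pick marginals $x_i$ with $\sum_i x_i = \sum_k k q^*_k$ and $x_i\le a_i$ (greedily), then for each $k$ in the support of $q^*$ decompose the appropriately scaled marginal vector into a convex combination of weight-$k$ indicator vectors via the standard polynomial-time rounding for the base polytope of the $k$-uniform matroid, and combine; this yields a succinctly representable optimal signaling policy.

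The main obstacle I anticipate is getting the exact characterization of realizable $(q,x)$ right — in particular verifying that the only binding constraint coupling the size distribution and the individual persuasion levels is the scalar inequality $\sum_k k q_k \le \sum_i a_i$, i.e.\ that there is no hidden obstruction from, say, many small $a_i$ versus one large required marginal. Handling the cap at $n$ and the case where $\sum_i a_i$ is not integral (so the support of the optimal $q$ straddles $\lfloor A\rfloor$ and $\lceil A\rceil$) needs a careful but elementary argument, and the explicit decomposition into a polynomial-size policy must be done with a concrete algorithm rather than an appeal to Carathéodory, since we want a genuinely polynomial-time construction.
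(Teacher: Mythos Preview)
Your reduction to the single scalar constraint $\sum_k k\,q_k \le \sum_i a_i$ is incorrect, and the gap is exactly the ``hidden obstruction'' you flagged but then dismissed. Take $n=3$, $a=(1,0,0)$, so $A=1$. Your LP allows $q_2=\tfrac12$, $q_0=\tfrac12$ (since $\sum_k k q_k=1\le A$), but no distribution over size-$2$ subsets can have marginal zero on both agents $2$ and $3$: any size-$2$ set contains at least one of them. More generally, with $a=(1,0,\ldots,0)$ and a convex increasing $f$, the true concave closure is $f(1)$ (only $\emptyset$ and $\{1\}$ are usable), whereas your LP returns $\max_k \tfrac1k f(k)$, which can be arbitrarily larger. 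So the feasibility region for $q$ is \emph{not} captured by the aggregate budget alone; the per-element caps $a_i$ interact with the size distribution in a genuinely $n$-dimensional way.

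The paper handles this via a finer decomposition. It invokes the formula of \cite{AB},
\[
\beta_k(a_1,\ldots,a_n)=\min_{0\le m<k}\frac{1}{k-m}\sum_{i>m} a_i,
\]
for the maximum mass placeable on size-$k$ sets under marginals $a$, and then introduces $n^2$ variables $a_i^k$ (the marginal on agent $i$ contributed by size-$k$ sets) together with weights $\alpha_k$. The constraints are $\alpha_k\le\beta_k(a_1^k,\ldots,a_n^k)$ and $\sum_k a_i^k\le a_i$. A priori $\beta_k$ is a minimum of linear forms, hence concave, and the constraint $\alpha_k\le\beta_k(\cdot)$ is nonconvex in general; the paper's key structural step (its Lemma~\ref{lem:mo}) shows one may always take $a_1^k\ge a_2^k\ge\cdots\ge a_n^k$, after which $\alpha_k\le\beta_k$ unfolds into $k$ linear inequalities and the whole thing becomes a polynomial-size LP. Your outline is missing both the correct per-size constraint $\beta_k$ and the monotonicity lemma that linearizes it; without these, the proposed LP does not compute $V^+(a)$.
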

\subsection{Proof of Theorem \ref{theo:ano}}
We show that the concave closure of anonymous function can be computed in polynomial time. 

We use $\mathcal{S}_k$ to denote all the size-$k$ subsets of $[n]$, $\mathcal{S}_k :=\{ S \subseteq [n] \mid |S|=k \}$. We denote by $\mrg(\mu)_i:=\sum_{S\subset [n] : i\in S}\mu(S)$ the marginal probability of the $i$th coordinate to be equal $1$. Note that the constrains of  the concave closure $V^+(a)$ can be written as $\mrg(\mu)_i \leq a_i$ for every $i\in [n]$.

The following lemma from \cite{AB} characterizes the maximum probability mass that  can be assigned to subsets of size $k$ under the constraints imposed by the persuasion levels profile $a= (a_1,...,a_2)$. 

\begin{lemma}[\cite{AB}]\label{lem:AB}
Let $1\geq a_1\geq a_2\geq\ldots\geq a_n\geq 0$ be a monotonic sequence. The solution for the following maximization problem
\begin{align*}
\max & \ \ \sum_{S\in \mathcal{S}_k} \mu(S) \\
\text{ subject to } & \ \ \mrg(\mu)_i \leq a_i \qquad \forall i \in [n] \numberthis \label{eq:maxi-k} \\
& \ \  \mu_S \geq 0 
\end{align*}
is given by
\begin{align*}
\beta_k(a_1,...,a_n) & =\min_{0\leq m<k} \frac{1}{k-m}(a_{m+1}+\ldots+a_n).\\
\end{align*}
Moreover, such a measure $\mu$ that maximizes \eqref{eq:maxi-k} can be computed in polynomial time.
\end{lemma}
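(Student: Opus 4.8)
The plan is to prove the lemma in two halves: an upper bound showing no feasible $\mu$ supported on $\mathcal{S}_k$ can have total mass exceeding $\beta_k$, and a matching construction (computable in polynomial time) achieving it. For the upper bound, fix any $m$ with $0 \le m < k$, and observe that each set $S \in \mathcal{S}_k$ contains at least $k-m$ elements from $\{m+1, m+2, \ldots, n\}$ (since $S$ has $k$ elements and only $m$ of them can lie in $\{1,\ldots,m\}$). Therefore, summing the marginal constraints over $i \in \{m+1,\ldots,n\}$ gives
\[
\sum_{i=m+1}^{n} a_i \;\ge\; \sum_{i=m+1}^{n} \mrg(\mu)_i \;=\; \sum_{S \in \mathcal{S}_k} \mu(S)\,\bigl|\,S \cap \{m+1,\ldots,n\}\,\bigr| \;\ge\; (k-m)\sum_{S \in \mathcal{S}_k}\mu(S),
\]
so $\sum_S \mu(S) \le \frac{1}{k-m}(a_{m+1} + \cdots + a_n)$; taking the minimum over $m$ yields $\sum_S \mu(S) \le \beta_k(a_1,\ldots,a_n)$.

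For the lower bound (and the polynomial-time computability claim), I would exhibit an explicit feasible $\mu$ of total mass $\beta_k$. The natural approach is a greedy/water-filling construction: one wants to place mass on size-$k$ sets so that the marginals are "as equal as possible" subject to $\mrg(\mu)_i \le a_i$, prioritizing the coordinates with larger $a_i$. Concretely, I expect the optimal $\mu$ to be supported on a small (polynomially many) family of size-$k$ sets — e.g. obtained by a rounding/decomposition of the fractional point that assigns coordinate $i$ a total marginal $b_i := \min(a_i, \text{something})$ with $\sum_i b_i = k\beta_k$ — and then invoke a Birkhoff–von Neumann / flow-type decomposition to write this fractional degree sequence as a convex combination of $0/1$ vectors of weight $k$. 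Alternatively, since the lemma is attributed to \cite{AB}, I would simply cite their construction; but to keep the section self-contained I would spell out the water-filling: let $m^*$ be the minimizing index, set $\mu$ to be uniform-ish mass on sets that always include $\{1,\ldots,m^*\}$ and spread the remaining $k-m^*$ slots evenly over $\{m^*+1,\ldots,n\}$, scaling so the heavy coordinates $1,\ldots,m^*$ receive marginal exactly $\beta_k \le a_{m^*} \le a_i$ and the light coordinates receive marginal $\frac{1}{n-m^*}\sum_{i>m^*}a_i \le a_i$ for $i > m^*$ (this inequality is exactly what optimality of $m^*$ buys us, via a standard averaging argument comparing $m^*$ and $m^*-1$ or $m^*+1$).

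The main obstacle is the lower-bound construction: verifying that the water-filling $\mu$ is genuinely feasible (all marginal constraints hold, not just the aggregated ones used in the upper bound) requires using the defining minimality of $m^*$ in a non-trivial way — the inequalities $\beta_k \le a_{m+1}$ for the relevant $m$ — and one must be careful that the "spread evenly" step can actually be realized by a distribution over integer-weight-$k$ sets rather than only fractionally. This is where a clean decomposition lemma (or a direct inductive peeling argument on the coordinates, removing one saturated coordinate at a time) is needed, and it is the step I would write out most carefully; the polynomial-time claim then follows because the decomposition produces at most $O(n)$ sets.
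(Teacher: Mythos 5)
The paper does not reprove this lemma; it is quoted from \cite{AB}, and the closest in-paper relative is the proof of Lemma \ref{lem:mo}, which the paper states ``builds upon ideas that were used in the proof of Lemma \ref{lem:AB} in \cite{AB}.'' So I will assess your argument on its own terms and against the construction the paper uses there.

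Your upper bound is correct and is the natural aggregation argument: for any $0 \le m < k$, each $S \in \mathcal{S}_k$ intersects $\{m+1,\ldots,n\}$ in at least $k-m$ elements, so summing $\mrg(\mu)_i \le a_i$ over $i > m$ gives $(k-m)\sum_S \mu(S) \le \sum_{i>m}a_i$, and minimizing over $m$ gives $\sum_S \mu(S) \le \beta_k$.

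The lower bound as written, though, contains a concrete false step, not just an acknowledged obstacle. You claim the uniform spread over $\{m^*+1,\ldots,n\}$ gives every light coordinate marginal $\frac{1}{n-m^*}\sum_{j>m^*}a_j \le a_i$ for all $i > m^*$, and attribute this to optimality of $m^*$. That quantity is the \emph{average} of the tail $a_{m^*+1},\ldots,a_n$, which in general exceeds the smaller entries of the tail, in particular $a_n$. Concretely: with $n=3$, $k=2$, $a = (1, \tfrac{1}{2}, \tfrac{1}{100})$, one has $\beta_2 = \min\bigl(\tfrac{1}{2}(1.51), 0.51\bigr) = 0.51$ attained at $m^*=1$, and spreading $0.51$ mass evenly over $\{1,2\}$ and $\{1,3\}$ gives coordinate $3$ marginal $0.255 \gg a_3 = 0.01$. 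Optimality of $m^*$ does yield $\beta_k \le a_{m^*}$ (comparing $m^*$ with $m^*-1$), which protects the prefix $[m^*]$, but it yields nothing about individual tail coordinates, so the ``standard averaging argument'' you invoke does not close the gap.

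What is actually needed is precisely the staged water-filling / peeling you gesture at but do not carry out, and it is the construction the paper uses in Lemma \ref{lem:mo}: maintain a residual marginal vector, spread mass uniformly over size-$k$ subsets that always contain a current prefix $[j]$ with the remaining $k-j$ slots drawn from a window $[j+1,l]$; stop a stage as soon as either the $j$-th residual equals the $(j+1)$-th or the $l$-th equals the $(l+1)$-th, enlarge the window, and continue. Each stage merges two levels of the residual vector, so there are at most $n$ stages, and the constructed measure is supported on polynomially many sets. Feasibility through all stages up to total mass $\beta_k$ then needs the counting argument in the style of \eqref{eq:tot-mar}--\eqref{eq:tot-mar2} to rule out premature exhaustion of the tail. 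Without spelling out this staged construction (or an equivalent flow/polymatroid decomposition with the per-coordinate caps $a_i$, not just a Birkhoff-type decomposition of an averaged degree sequence), the lower bound and the polynomial-time computability claim remain unproven.
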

The key idea is to use this lemma to solve the LP corresponding to the concave closure---which has exponential (in $n$) number of variables---by another LP that has a polynomial number of variables. We split the original problem into $n$ problems of finding a measure $\mu_k$ over $\mathcal{S}_k$ for every $k=1,...,n$ (the final measure is defined by $\mu=\mu_1+...+\mu_n$). The new maximization problem has $n^2$ variables $(a_i^j)_{i,j\in [n]}$, where $(a_1^j,...,a_n^j)$ represents the marginal constrain vector on subsets of size $k$. We denote by $\alpha_k$ the measure that is assigned to subsets of size $k$, then the original maximization problem can be translated to the following  
\begin{align*}
&\max \text{ } \alpha_1 f(1) + \alpha_2 f(2)+... + \alpha_n f(n) \numberthis \label{eq:polyLP}\\ 
&\text{ s.t. } \sum_{k\in [n]} \alpha_k=1, \text{ } 0\leq \alpha_k \leq \beta_k(a^k_1,a^k_2,...,a^k_n) \text{ for } k\in [n], \text{ and } \sum_{j\in [n]} a^j_i \leq a_i.
\end{align*}
where the first constrain corresponds to $\sum_{S\subset [n]}\mu_S=1$, the second follows from Lemma \ref{lem:AB}, and the last constrain uses the fact that marginals preserve additivity, and thus correspond to $\sum_{S\subset [n]}\mu_S 1_S\leq a$.
 
The following Lemma shows that adding the constrains $a_1^k\geq a_2^k \geq ...\geq a_n^k$ for every $k\in [n]$ to the maximization problem \eqref{eq:polyLP} do not change the maximum value. 

\begin{lemma}\label{lem:mo}
For $(\mu_S)_{S\subset [n]}$ we denote by $a_i^k=\sum_{S\in \mathcal{S}_k : \  i\in S} \mu(S)$. There exists $\mu$ that maximizes \eqref{eq:maxi2} that satisfies $a_1^k\geq a_2^k \geq ...\geq a_n^k$ for every $k\in [n]$.
\end{lemma}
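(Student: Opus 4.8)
The plan is to start from an arbitrary optimal solution $(\mu_S)_{S \subseteq [n]}$ of \eqref{eq:maxi2}, assume (as in Lemma~\ref{lem:AB}) that the persuasion levels are already sorted $a_1 \ge a_2 \ge \dots \ge a_n$, and show that by a sequence of measure-preserving ``relabelings'' restricted to each level $\mathcal{S}_k$ we can arrange the level-$k$ marginals $a_1^k \ge a_2^k \ge \dots \ge a_n^k$ simultaneously for all $k$. The crucial observation is that the objective $\sum_S \mu_S V(S) = \sum_k \sum_{S \in \mathcal{S}_k} \mu_S f(k)$ depends only on the total mass $\alpha_k = \sum_{S \in \mathcal{S}_k}\mu_S$ placed on each level, so any permutation of the ground set $[n]$ applied to the subsets within a fixed level leaves the objective value unchanged. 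Hence the only thing we must preserve is feasibility, i.e. the constraint $\mathrm{marg}(\mu)_i = \sum_k a_i^k \le a_i$ for every $i$.

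The key step is a local swap argument. Suppose for some level $k$ and some pair $i < j$ we have $a_i^k < a_j^k$; I want to exchange the roles of coordinates $i$ and $j$ \emph{within level $k$ only}. Concretely, define a new level-$k$ measure $\mu'_k$ on $\mathcal{S}_k$ by, for each $S \in \mathcal{S}_k$, moving the mass of $S$ to the set $\pi_{ij}(S)$ obtained by transposing the membership of $i$ and $j$ in $S$ (sets containing both or neither of $i,j$ are fixed; sets containing exactly one are swapped). This is a bijection of $\mathcal{S}_k$, so $\alpha_k$ is unchanged, and it only swaps $a_i^k \leftrightarrow a_j^k$ while leaving all other $a_\ell^k$ fixed. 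I then need to check that the resulting $\mu$ is still feasible. For coordinates $\ell \notin \{i,j\}$ nothing changes. For coordinate $i$: its new total marginal is $a_i - a_i^k + a_j^k$, and since we started from a feasible point $a_j^k \le a_j$ and $j \ge i$ gives $a_j \le a_i$, wait — this does not immediately work, so the swap has to be done more carefully.

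The main obstacle — and where I expect to spend the real effort — is exactly this feasibility check: swapping within one level can push coordinate $i$'s \emph{total} marginal above $a_i$ even though coordinate $j$ had slack. The fix is to not swap the whole level but to transfer only the \emph{excess}: if $a_i^k < a_j^k$ while the sorted order is violated, move an amount $\delta = \min(a_j^k - a_i^k,\, \text{available slack})$ of level-$k$ mass from ``$j$-containing, $i$-avoiding'' sets to their $i \leftrightarrow j$ transposes, chosen so that coordinate $i$'s total marginal never exceeds $a_i$; one shows such a $\delta>0$ transfer is always possible unless the order is already correct, because the violation $a_i^k < a_j^k$ together with $a_i \ge a_j \ge \text{(current marginal of }j)$ furnishes the needed room. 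Iterating bubble-sort-style across all adjacent pairs and all levels (and arguing termination via a potential function such as $\sum_{k}\sum_i i \cdot a_i^k$, or the number of inversions) yields an optimal $\mu$ with $a_1^k \ge \dots \ge a_n^k$ for every $k$, which is the claim. I would also remark that the process keeps $\mu$ supported on polynomially many sets if it started that way, though that is not needed for the statement itself.
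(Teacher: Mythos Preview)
Your approach is genuinely different from the paper's: the paper does not try to massage an existing optimizer into monotone form via swaps; instead it keeps only the level masses $\alpha_k=\mu(\mathcal{S}_k)$ and \emph{rebuilds} a new measure from scratch by a greedy process that runs over $k=n,n-1,\ldots,1$ and at each stage spreads mass uniformly over sets $S$ with $[j]\subseteq S\subseteq[l]$ determined by the current ``plateau'' in the residual marginal vector. Monotonicity of the level marginals then falls out from the shape of these spreads, and feasibility is argued by showing this greedy process is maximally thrifty on the low-$a_i$ coordinates.

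Your swap idea is natural, but the feasibility step has a real gap. You assert that whenever $i<j$ and $a_i^k<a_j^k$, a strictly positive $\delta$ can be transferred at level $k$ because ``$a_i\ge a_j\ge\text{(current marginal of }j)$ furnishes the needed room.'' That inequality is about coordinate $j$, while the binding constraint after the transfer is at coordinate $i$: you need slack $a_i-\mrg(\mu)_i>0$, and nothing you wrote gives that. A concrete counterexample: take $n=3$, $a_1=a_2=a_3=\tfrac12$, and $\mu$ placing mass $\tfrac12$ on $\{3\}$ and $\tfrac12$ on $\{1,2\}$. Then $(a_1^1,a_2^1,a_3^1)=(0,0,\tfrac12)$ is an inversion at level $1$, yet $\mrg(\mu)_1=\tfrac12=a_1$, so no single-level transfer toward coordinate $1$ is feasible at all. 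Your proposed potential $\sum_k\sum_i i\,a_i^k$ does not rescue this: it equals $\sum_i i\,\mrg(\mu)_i$, so any move that keeps total marginals fixed (which is exactly what you are forced into when there is no slack) leaves the potential unchanged.

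To repair the argument along your lines you would need coupled adjustments across levels: when coordinate $i$ is tight but $a_i^k<a_j^k$, note $\mrg(\mu)_i\ge\mrg(\mu)_j$ (since $a_i\ge a_j$ and $i$ is tight), so some other level $k'$ has $a_i^{k'}>a_j^{k'}$, and a simultaneous opposite transfer at level $k'$ frees room at $i$ without changing totals. One then needs a potential that strictly decreases under this two-level move (the inversion count $\sum_k\sum_{i<j}\mathbf{1}[a_i^k<a_j^k]$, or the total inversion mass, with care about effects on third coordinates $\ell$) and a separate argument that the resulting marginal vectors $(a_i^k)_i$ are realizable by an actual measure on $\mathcal{S}_k$. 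None of this is in your sketch, and it is where the real work lies.
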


The proof of this Lemma is relegated to Section \ref{sec:lem-proof}.


Note that for the case where $a_1^k\geq a_2^k \geq ...\geq a_n^k$ the constraint $\alpha_k \leq \beta_k(a^k_1,a^k_2,...,a^k_n)$ can be written as
\begin{align*}
\begin{cases}
\alpha_k \leq \frac{1}{k}(a_1+a_2+...+a_n) \\
\alpha_k \leq \frac{1}{k-1}(a_2+a_3...+a_n) \\
\vdots \\
\alpha_k \leq \frac{1}{1}(a_k+a_{k+1}+...+a_n)
\end{cases}
\end{align*}
because $\beta_k$ is defined as the minimum of the right hand side expressions.
Therefore the maximization problem \eqref{eq:polyLP} with the additional constrains $a_1^k\geq a_2^k \geq ...\geq a_n^k$ is an LP maximization with $poly(n)$ number of variables, and it solves the original maximization problem. Moreover, since the proof of Lemma \cite{AB} is constructive, after we have computed the values of $(a_i^j)_{i,j\in [n]}$ that maximize \eqref{eq:polyLP} we can compute the optimal policy as well.

\subsubsection{Proof of Lemma \ref{lem:mo}}\label{sec:lem-proof}
The proof builds upon ideas that were used in the proof of Lemma \ref{lem:AB} in \cite{AB}.
  
Let $\nu$ be a distribution that satisfies the constrains $\mrg(\nu)_i\leq a_i$, and let $\alpha_k=\nu(\mathcal{S}_k)$ be the weight of $\nu$ on subsets of size $k$. It is sufficient to construct another distribution $\mu$ that satisfies $\mu(\mathcal{S}_k)=\alpha_k$ (and thus $\mu$ has the same revenue as $\nu$), and in addition $a_1^k\geq a^k_2 \geq ... \geq a^k_n$. 

The construction is done in $n$ steps, where the steps $k=n,n-1,...,1$ are done in an decreasing order. At step $k$ we assign a measure of $\alpha_{k}$ to subsets of size $k$, and we denote the assigned measure by $\mu_k$. Each step $k$ is done in finite number of stages (at most $n$ stages). Here we describe the assignment of measure at stage $k.m$.

During the construction we "assign mass" and thus, we "spend marginal constrains". We take track of the remaining marginal constrains vector. At the beginning, we set the constrains vector $(a_1^{n.0},...,a_n^{n.0})=(a_1,...,a_n)$ to be the original constrains.

During the process we preserve the monotonicity of the marginal constrains vector and therefore we can denote the marginal constrains vector at stage $k.m$ by
\begin{align*}
(a_1^{k.m},...,a_n^{k.m})=(b_1,...,b_{j},\underbrace{c,c,...,c}_{l-j \text{ times}},b_{l+1},...,b_n)
\end{align*}
where $b_{j}>c>b_{l+1}$ and $j<k\leq l$. Note that if $a^{k.j}_k=a^{k.j}_{k+1}=...=a^{k.j}_n$ then $l=n$ and for simplicity of notation we denote $b_{n+1}=0$. Note that if $a^{k.j}_1=a^{k.j}_{2}=...=a^{k.j}_k$ then $j=0$, and for simplicity of notation we denote $b_0>b_1$. 

At stage $k.m$, the idea is to distribute mass equally over the subsets $S$ of size $k$ that satisfy $[j]\subseteq S \subseteq [l]$ (we have $\binom{l-j}{k-j}$ such sets). If we do so, after we have distributed $x$ units of mass the remaining marginal constrains vector will be 
\begin{align}\label{eq:marg-con1}
b(x)=(b_1-x,...,b_j-x,c-\frac{k-j}{l-j}x,...,c-\frac{k-j}{l-j}x,b_{l+1},...,b_n)
\end{align}
because every element $i=j+1,j+2,...,l$ appears in exactly $\frac{k-j}{l-j}$ fraction of the above subsets. Step $k.m$ terminates at the moment when one of the following three happens:

\begin{itemize}
\item[(1)] The total mass that has been assigned during step $k$ reaches $\alpha_k$. In such a case we proceed to step $k-1$.
\item[(2)] The $j$th coordinate becomes equal to the $(j+1)$th coordinate. In such a case we proceed to stage $k.(m+1)$.
\item[(3)] The $l$th coordinate becomes equal to the $(l+1)$th coordinate. In such a case we proceed to stage $k.(m+1)$.
\end{itemize}

We denote by $\alpha_{k.m}$ the amount of mass that has been assigned during step $k.m$. We denote by $b(\alpha_{k.m})$ the marginal constrains vector after step $k.m$, where $b(\cdot)$ is defined in equation \eqref{eq:marg-con1}. This marginal constrains serves as the marginal constrain vector for the next step (in case (1) happens) or the next stage (in case (2) or (3) happens).
 
We argue the following two statements, which will complete the proof. 
\begin{enumerate}
\item The described process succeeds to complete all the $n$ steps.
\item The described process at each step $k$ assigns mass in a way that $a_1^k\geq a_2^k \geq ... \geq a_n^k$.
\end{enumerate}

Statement (2) follows from the fact that at each stage $k.m$ the marginals of the assigned mass is of the form $(\underbrace{x,...,x}_{j_m \text{ times}},\underbrace{cx,...,cx}_{l_m-j_m \text{ times}},0,...,0)$ for $x=\alpha_{k.m}$ and $c<1$. Moreover, during step $k$ the coordinate $j_m$ is monotonically decreasing, and the coordinate $l_m$ is monotonically increasing. Therefore, the sum of those vectors, which is equal to the vector $(a_1^k, a_2^k, ..., a_n^k)$ is monotonically increasing.

Assume by way of contradiction that statement (1) is false. The above process cannot assign the required measure only if we are at step $k$ and the marginal constrains vector becomes $(d_1,d_2,...,d_m,0,...,0)$ for $m<k$. In such a case indeed the process cannot proceed, because it will turn the $m+1$ coordinate of the marginal constrain vector negative. We denote by $\alpha'_k$ the measure at step $k$ that has been assigned up to the moment of termination. 

We argue that this is impossible from the fact that $\alpha_n,\alpha_{n-1},...,\alpha_k$ are feasible weights for some distribution $\nu$. The idea is that the described above process has minimal marginals on the elements $m+1,...,n$, thus if this process cannot proceed neither could some other distribution $\nu$.  Formally, we denote $\nu=\nu_1+...+\nu_n$, where $\nu_j$ is a measure over $\mathcal{S}_j$. Note that $|\nu_j|=\alpha_j$. We denote $(d_i^j)_{i\in [n]}$ the marginals of $\nu_j$. We argue that $\sum_{i=m+1}^n d_i^j \geq (j-m)\alpha_j$, because every subset of size $j$ contains at least $j-m$ elements from the set $\{m+1,...,n\}$. Therefore we have 
\begin{align}\label{eq:tot-mar}
a_{m+1}+...+a_n \geq \sum_{j=k}^n \sum_{i=m+1}^n d_i^j \geq \sum_{j=k}^n (j-m)\alpha_j
\end{align}

On the other hand, the constructed measure $\mu_n$ with marginals $(a_i^j)_{i\in [n]}$ satisfies $\sum_{i=m+1}^n a_i^j = (j-m)\alpha_j$, because this process assigns positive probability \emph{only} to subsets that contain $\{1,...,m\}$ (because $m<k\leq j$ and $a^j_m>a^j_{m+1}$). Since the process spent all the marginal constrains $a_{m+1},...,a_n$ we have
\begin{align}\label{eq:tot-mar2}
a_{m+1}+...+a_n = \sum_{j=k}^n \sum_{i=m+1}^n a_i^j = (k-m)\alpha'_k + \sum_{j=k+1}^n (j-m)\alpha_j < \sum_{j=k}^n (j-m)\alpha_j
\end{align}
Inequalities \eqref{eq:tot-mar} and \eqref{eq:tot-mar2} yield a contradiction. This completes the proof of the lemma.

\section{Submodular Utilities}
\label{sect:submod}

Recall that a function $V$ is \emph{submodular} if
$V(S\cup\{i\})-V(S)\geq V(T\cup\{i\})-V(T)$ for every $T\subset S \subset [n]$ and every $i\in [n]$. This section considers private Bayesian persuasion settings in which the sender's utility function is submodular. In particular, we develop a tight $(1 - 1/e)$ approximation of 
the optimal signaling policy when the sender's utility is a monotone submodular function. 

It is relevant to note that our algorithmic results require only query access to the submodular function, i.e., our results hold as long as we have access to $V(S)$, for any subset $S \subseteq [n]$. 

We begin by noting that finding the concave closure of a submodular function is {\rm NP}-hard: Given a succinct, monotone, submodular function $f: 2^{[n]} \rightarrow \mathbb{R}$ and a vector $a \in [0,1]^n$, it is {\rm NP}-hard to compute the concave closure $f^+(a)$; see, e.g.~\cite{vondrak2007submodularity,dug}.

 



\subsection{Approximation Algorithm for Submodular Utilities}

This section provides a $(1-1/e)$-approximation algorithm for computing the concave closure of a monotone, submodular function $V$. 
We obtain the $\left( 1 - \frac{1}{e} \right)$ approximation by reducing the computation of the concave closure to the problem of maximizing a submodular function subject to a matroid constraint. The key implication of this approximation result is the following theorem. 

\begin{theorem}
\label{thm:submod}
If in a private Bayesian persuasion problem the utility of the sender, $V$, is a monotone submodular function, then, we can compute  in polynomial time a signaling policy that achieves a revenue of at least $(1-1/e - \varepsilon)$ times the optimal; here, $\varepsilon$ is an arbitrarily small constant. 
\end{theorem}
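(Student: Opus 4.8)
The plan is to invoke Lemma~\ref{lem:aprox}: it suffices to give a polynomial-time algorithm that, given the monotone submodular utility $V$ and a persuasion profile $a\in[0,1]^n$, outputs a feasible solution of the concave-closure program \eqref{eq:maxi2} with objective value at least $(1-1/e-\varepsilon)\,V^+(a)$. Plugging such an output into the forward direction of Lemma~\ref{lem:aprox} produces, in polynomial time, a signaling policy whose revenue is at least $(1-1/e-\varepsilon)$ times the optimal. So the entire task reduces to $(1-1/e-\varepsilon)$-approximating the concave closure of a monotone submodular function.

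I would get that approximation by reducing to maximization of a monotone submodular function over a \emph{partition matroid} and then invoking the $(1-1/e)$-approximation of~\cite{ccpv}. Fix a large integer $K$ (to be chosen polynomially large), let $b_i:=\lfloor K a_i\rfloor$, take ground set $\Gamma:=[n]\times[K]$ partitioned into blocks $\Gamma_i:=\{i\}\times[K]$, and let $\mathcal M$ be the partition matroid whose independent sets contain at most $b_i$ elements of block $\Gamma_i$. For $A\subseteq\Gamma$ put $S_t(A):=\{\,i:(i,t)\in A\,\}$ and $g(A):=\sum_{t=1}^{K}V\!\big(S_t(A)\big)$; each summand $A\mapsto V(S_t(A))$ is monotone submodular (it is $V$ composed with the projection onto time-slot $t$), hence so is $g$. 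An independent set $A\in\mathcal M$ induces the feasible point $\mu^{A}:=\tfrac1K\sum_{t=1}^{K}1_{S_t(A)}$ of \eqref{eq:maxi2} (its $i$-th marginal is $\tfrac1K\#\{t:(i,t)\in A\}\le b_i/K\le a_i$) with objective $\tfrac1K g(A)$; conversely, a vertex-optimal $\mu^\star$ of \eqref{eq:maxi2} is supported on at most $n{+}1$ sets, and rounding its weights down to multiples of $1/K$ (dumping the lost mass onto $\emptyset$) gives an independent set of value at least $V^+(a)-\tfrac{(n+1)V([n])}{K}$. Hence $\tfrac1K\max_{A\in\mathcal M}g(A)\ge V^+(a)-\tfrac{(n+1)V([n])}{K}$, and running continuous greedy of~\cite{ccpv} on $(g,\mathcal M)$ in time polynomial in $|\Gamma|$ and $1/\varepsilon_0$ returns $A\in\mathcal M$ with $g(A)\ge(1-1/e-\varepsilon_0)\max_{A'\in\mathcal M}g(A')$ for a $\varepsilon_0>0$ of our choosing; then $\mu^{A}$ is feasible for \eqref{eq:maxi2} with value at least $(1-1/e-\varepsilon_0)(V^+(a)-\tfrac{(n+1)V([n])}{K})$, which is at least $(1-1/e-\varepsilon)V^+(a)$ once $\varepsilon_0$ and $K$ are chosen appropriately. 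The two sources of the $\varepsilon$ slack are thus the $1/K$ discretization and the intrinsic loss of continuous greedy.

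The step I expect to be the crux is making this reduction approximation-preserving \emph{in polynomial time}: to absorb the discretization error into $\varepsilon V^+(a)$ rather than $\varepsilon V([n])$ one needs $K\gtrsim V([n])/(\varepsilon\,V^+(a))$, and $|\Gamma|=nK$ must remain polynomial in the input size, which requires either a polynomial lower bound on $V^+(a)$ in terms of the input or separate handling of agents with very small persuasion level. A clean escape is that the ``block-symmetric'' point of the above construction is exactly the product policy that, in state $\omega_0$, recommends adoption to each agent $i$ independently with probability $a_i$: this is a valid $F_0$ (its $i$-th marginal is exactly $a_i$), and by the correlation-gap bound for monotone submodular functions its expected utility is already at least $(1-1/e)V^+(a)$, so this explicitly described, trivially computable policy proves the theorem directly, even with $\varepsilon=0$. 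I would present the matroid reduction as the main line, since it is what ties the result to~\cite{ccpv}, and use the correlation-gap observation to settle the polynomial-time subtlety.
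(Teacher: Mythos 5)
Your proposal is correct, and the ``clean escape'' via the correlation gap is a genuinely different --- and cleaner --- route than the one the paper takes. The matroid construction you sketch (ground set $[n]\times[K]$, partition matroid with bounds $\lfloor Ka_i\rfloor$, submodular function $A\mapsto\sum_t V(S_t(A))$) is essentially identical to the paper's, and you have correctly identified the crux of that route: the discretization error $(n+1)V([n])/K$ must be absorbed into $\varepsilon V^+(a)$, which is not automatic when $V^+(a)$ can be tiny. The paper handles this by first splitting agents into \emph{high} ($a_i\geq 1/n^2$) and \emph{low} ($a_i<1/n^2$) persuasion levels, treating the low ones as singletons so that, on the remaining agents, $V^+$ is bounded below by $\frac{1}{n}\sum_i a_i V(\{i\})\geq\frac{1}{n^3}V([m])$ via subadditivity; this makes the discretization grid $\delta=1/(n^4(n+1))$ polynomially fine enough. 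Your correlation-gap observation bypasses all of this: since the paper's $V^+(a)$ coincides with the standard (equality-marginals) concave closure of $V$ at $a$ when $V$ is monotone (mass can always be pushed upward), and the correlation gap for monotone submodular functions gives $F(a)\geq(1-1/e)V^+(a)$ for the multilinear extension $F$, the product policy that in state $\omega_0$ independently recommends adoption to agent $i$ with probability $a_i$ already attains the bound, with $\varepsilon=0$ and no algorithm beyond reading off the $a_i$'s. That observation in fact proves something strictly stronger than the theorem (an explicit closed-form policy, exact $1-1/e$) and is arguably the right way to state the result; the paper's matroid/rounding argument earns its keep only as a constructive recipe tied to \cite{ccpv}, and it is worth noting that the two are not independent --- the correlation-gap inequality is essentially the engine behind the $(1-1/e)$ guarantee of \cite{ccpv} in the first place.
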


We present here an outline of the proof of Theorem \ref{thm:submod}, the details appear below in Section \ref{sect:conclose}.

\noindent
\emph{Proof outline of Theorem \ref{thm:submod}} The idea is to reduce the \emph{continuous} maximization problem of the concave closure to a \emph{discrete} optimization problem. One natural way to do so is by rounding the underlying probabilities to integer multiples of a parameter $\delta$ which is equal to $\frac{1}{\text{poly}(n)}$; i.e., to restrict our attention to the case where $\mu_S$ is an integer multiple of $\delta$ for every $S \subset [n]$. An involved part of the proof is to show that by imposing such a restriction we do not loose much in the approximation ratio.

After the reduction to a discrete optimization problem, we show that this new problem corresponds to maximizing a submodular function subject to a matroid constraint (in fact a partition matroid constraint). Finally, we use the result by \cite{ccpv} to conclude that the discrete problem (and, hence, the concave closure) admits a polynomial-time algorithm $(1-\frac{1}{e})$- approximation algorithm.

\subsubsection{Proof of Theorem \ref{thm:submod}}
\label{sect:conclose}

We show that the concave closure of submodular function $V$ at vector $a = (a_1, a_2, \ldots, a_n) \in [0,1]^n$ can be approximated within a factor of $\left(1-\frac{1}{e}-\varepsilon \right)$ for an arbitrarily small $\varepsilon>0$, then Theorem \ref{thm:submod} follows from Lemma \ref{lem:aprox}. 

We split the marginal values $a_i$ into two sets: $\{a_i:a_i\geq \frac{1}{n^2}\}$ are the \emph{high values} and $\{a_i:a_i< \frac{1}{n^2}\}$ are the \emph{low values}. Without loss of generality we assume that $a_1,...,a_m$ are the high values and $a_{m+1},...,a_n$ are the low values, for $m\leq n$.

Every distribution $\mu$ over subsets of $[n]$ induces a distribution $\nu=\nu(\mu)$ over subsets of $[m]$ in the following natural way: the probability mass $\mu_S$ on $S\subset [n]$ is moved to the set $S\cap [m]$,  formally for each subset $T \subseteq [m]$ define $\nu_T :=\sum_{S\subset [n]: S\cap [m]=T}\mu_S$. The following lemma holds for the distribution $\nu=\nu(\mu)$. 

\begin{lemma}\label{lem:sub}
For every distribution $\mu$ that satisfies the marginal constraints (i.e., $ \sum_{S \subseteq [n] : S \ni i } \mu_S \leq a_i$) we have
\begin{align*}
\sum_{S\subset [n]} \mu_S V(S)\geq \sum_{T\subset [m]} \nu(T)V(T)+\sum_{i=m+1}^n a_i V(\{i\}).
\end{align*}
\end{lemma}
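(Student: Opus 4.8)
The plan is to bound the utility from below by using submodularity to ``split'' each set $S$ into its high-value part $S \cap [m]$ and its low-value part $S \cap \{m+1,\dots,n\}$. First I would observe that for any set $S \subseteq [n]$, writing $T = S \cap [m]$ and $L = S \cap \{m+1,\dots,n\}$, submodularity (applied by adding the elements of $L$ one at a time to $T$, versus adding them one at a time to the empty set) gives
\begin{align*}
V(S) = V(T \cup L) \geq V(T) + \sum_{i \in L} \big( V(\{i\}) - V(\emptyset) \big) \geq V(T) + \sum_{i \in L} V(\{i\}) - |L| \cdot V(\emptyset),
\end{align*}
and since $V$ is monotone and nonnegative with $V(\emptyset) \geq 0$, the cleanest route is to note $V(\emptyset) \geq 0$ is not quite enough — so instead I would use the standard submodular inequality $V(T \cup L) \geq V(T) + \sum_{i\in L}\big(V(T\cup L) - V(T \cup L \setminus \{i\})\big)$ is the wrong direction; the correct one is $V(T\cup L) - V(T) \geq \sum_{i \in L}\big(V(L) - V(L\setminus\{i\})\big) \geq \sum_{i\in L}\big(V(\{i\}) - V(\emptyset)\big)$, using submodularity twice (decreasing marginal returns). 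Taking $V(\emptyset) \geq 0$ then yields $V(S) \geq V(S\cap[m]) + \sum_{i \in S, i > m} \big(V(\{i\}) - V(\emptyset)\big)$, and I will carry the $V(\emptyset)$ term carefully or simply absorb it since $V(\emptyset)\ge 0$ makes the claimed inequality only easier in that direction — wait, it makes it harder. Let me instead keep $V(\emptyset)=0$ WLOG (which is standard for these persuasion utilities, or can be arranged), so $V(S) \geq V(S\cap[m]) + \sum_{i \in S \setminus [m]} V(\{i\})$.

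Next I would take the expectation over $S \sim \mu$. The first term gives $\sum_S \mu_S V(S\cap[m]) = \sum_{T \subseteq [m]} \big(\sum_{S: S\cap[m]=T}\mu_S\big) V(T) = \sum_{T\subseteq[m]} \nu(T) V(T)$ by the definition of $\nu = \nu(\mu)$. For the second term, interchanging the order of summation,
\begin{align*}
\sum_{S \subseteq [n]} \mu_S \sum_{i \in S \setminus [m]} V(\{i\}) = \sum_{i=m+1}^n V(\{i\}) \sum_{S \ni i} \mu_S = \sum_{i=m+1}^n V(\{i\}) \cdot \mrg(\mu)_i.
\end{align*}
Here is where the hypothesis enters: the marginal constraint says $\mrg(\mu)_i \leq a_i$. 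But I need a lower bound, not an upper bound, so I cannot directly replace $\mrg(\mu)_i$ by $a_i$. This is the main obstacle, and I expect the resolution is that the statement as literally written should have $\mrg(\mu)_i$ (not $a_i$) in the last term, or else the intended reading is that we may first modify $\mu$ to saturate the low marginals without decreasing the objective. Since $V$ is monotone, moving more mass onto low-index elements $i > m$ (e.g. by adding element $i$ to sets not already containing it) only increases $\sum_S \mu_S V(S)$ and only increases each $\nu(T)$ is unaffected — actually $\nu$ depends only on $S \cap [m]$, so adding low elements does not change $\nu$ at all. Hence WLOG $\mrg(\mu)_i = a_i$ for every low $i$, and then $\sum_{i=m+1}^n V(\{i\})\mrg(\mu)_i = \sum_{i=m+1}^n a_i V(\{i\})$, giving exactly the claimed bound.

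So the key steps, in order, are: (1) reduce to $V(\emptyset) = 0$; (2) for the target $\mu$, argue WLOG that low marginals are saturated, $\mrg(\mu)_i = a_i$ for $i > m$, using monotonicity of $V$ and the fact that $\nu(\mu)$ is insensitive to the low coordinates; (3) apply the submodular (decreasing-marginal-returns) inequality pointwise to get $V(S) \geq V(S\cap[m]) + \sum_{i \in S \setminus[m]} V(\{i\})$; (4) take expectations and swap summation order to collect the two terms into $\sum_{T\subseteq[m]}\nu(T)V(T)$ and $\sum_{i=m+1}^n a_i V(\{i\})$ respectively. The only genuinely delicate point is step (2)/the direction-of-inequality issue in handling the marginal constraint; steps (3) and (4) are routine manipulations once that is settled.
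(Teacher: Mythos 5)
The lemma as printed in the paper has the inequality reversed: the correct statement, which is what the paper's proof actually establishes and what the subsequent Lemma~\ref{lem:n->m} requires, is
\[
\sum_{S\subseteq [n]} \mu_S V(S)\;\leq\; \sum_{T\subseteq [m]} \nu(T)V(T)+\sum_{i=m+1}^n a_i V(\{i\}),
\]
i.e.\ an \emph{upper} bound on the value attainable over $[n]$ in terms of the restricted problem on $[m]$ plus the contribution of the low-persuasion-level coordinates. (To see that the $\leq$ direction is what is needed, look at the last line of the proof of Lemma~\ref{lem:n->m}: one must bound $V^+(a)$ from \emph{above} by $V^+_m(a)+\sum_{i>m}a_i V(\{i\})$, which requires applying this lemma to an optimal $\mu^*$ for \eqref{eq:maxi2}.) You correctly sensed a direction mismatch when you reached the marginal constraint $\mrg(\mu)_i\leq a_i$ and found you wanted the reverse inequality; that mismatch is a symptom of the sign typo in the statement, not something to be patched by modifying $\mu$.

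Once the direction is fixed, the proof is a one-step subadditivity argument, which is the opposite of the inequality you invoke. Your central pointwise claim $V(S)\geq V(S\cap[m])+\sum_{i\in S,\,i>m}V(\{i\})$ (with $V(\emptyset)=0$) is \emph{false} for monotone submodular $V$: take $V$ to be the coverage function in which every ground element covers the same single item, so $V(S)=1$ for all nonempty $S$. For $S=\{1,i\}$ with $i>m$ this would give $1\geq V(\{1\})+V(\{i\})=2$. The correct inequality is the reverse, $V(T\cup L)\leq V(T)+\sum_{i\in L}V(\{i\})$ (subadditivity of a nonnegative submodular function), and indeed in your intermediate chain both inequalities $V(T\cup L)-V(T)\geq\sum_{i\in L}(V(L)-V(L\setminus\{i\}))$ and $V(L)-V(L\setminus\{i\})\geq V(\{i\})-V(\emptyset)$ point the wrong way: submodularity gives $V(L)-V(L\setminus\{i\})\leq V(\{i\})-V(\emptyset)$, not $\geq$. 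Your ``WLOG saturate the low marginals'' step also cannot rescue the $\geq$ direction: replacing $\mu$ by $\mu'$ with more mass on low-index elements increases the left-hand side, so proving the inequality for $\mu'$ gives no bound on $\sum_S\mu_S V(S)$ for the original $\mu$. The intended argument is simply: apply subadditivity pointwise to get $V(S)\leq V(S\cap[m])+\sum_{i\in S,\,i>m}V(\{i\})$, take the expectation under $\mu$, regroup the first term into $\sum_T\nu_T V(T)$, swap the order of summation in the second term to obtain $\sum_{i>m}\mrg(\mu)_i V(\{i\})$, and then use $\mrg(\mu)_i\leq a_i$ — which now goes in the right direction.
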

\begin{proof}
\begin{align*}
\sum_{S\subset [n]} \mu_S V(S) &\leq \sum_{S\subset [n]} \mu_S \ \left[  V(S\cap [m])+\sum_{i\in S,i>m} V(\{i\})  \right]\\ 
&=\sum_{T\subset [m]} \nu_T V(T)+ \sum_{S} \sum_{i\in S,i>m} \mu_S V(\{i\})\\ 
&= \sum_{T\subset [m]} \nu_T V(T)+ \sum_{i>m} \sum_{S:i\in S} \mu_S V(\{i\}) \\ 
&\leq \sum_{T\subset [m]} \nu_T V(T) + \sum_{i>m} a_i V(\{i\}),
\end{align*}
where the first inequality follows from subadditivity of $V$. The second equation follows from the definition of $\nu=\nu(\mu)$. The third equation is obtained by changing the order of summation and the last inequality follows from the fact that $\mu$ satisfies the marginal constraints.
\end{proof}

We can consider the optimization problem corresponding to the concave closure restricted to the set $[m]$:
\begin{align}
V^+_m(a) & := \max \sum_{T\subset [m]} \nu_T V(T) \label{eq:max-m} \\
& \qquad \text{s.t.}  \  \sum_{T \subseteq [m] : T \ni i } \nu_T \leq a_i \qquad \forall  i \in [m] \nonumber \\
&  \qquad \qquad \nu \textrm{ is a probability measure.} \nonumber 
\end{align}

Given a distribution $\overline{\nu}$ that $\alpha$-approximates problem \eqref{eq:max-m}, we define distribution $\overline{\mu}$ over $[n]$ as follows:
For each subset $T \subseteq [m]$, set $\overline{\mu}_T := (1-\frac{1}{n})\overline{\nu}$. In addition, for every $i > m$ set $\overline{\mu}_{\{i\}} := a_i$. Finally, to ensure that $\overline{\mu}$ is a probability measure we assign a probability mass of $c=\frac{1}{n}-a_{m+1}-...-a_n>0$ to the empty set, i.e., $\mu_\phi :=c$.

\begin{lemma}\label{lem:n->m}
If a distribution $\overline{\nu}$ $\alpha$-approximates problem \eqref{eq:max-m}, then $\overline{\mu}$ provides a $(1-\frac{1}{n})\alpha$-approximation of the original concave closure problem \eqref{eq:maxi2}.
\end{lemma}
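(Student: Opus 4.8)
The plan is to verify two properties of $\overline{\mu}$: that it is a feasible point of the concave-closure LP \eqref{eq:maxi2}, and that its objective value is at least $(1-\frac1n)\alpha\cdot V^+(a)$. Feasibility is a direct computation. The support of $\overline{\mu}$ consists of the sets $T\subseteq[m]$ carrying mass $(1-\frac1n)\overline{\nu}_T$, the singletons $\{i\}$ for $i>m$ carrying mass $a_i$, and the empty set, which additionally receives the corrective mass $c$. For a high coordinate $i\le m$ the marginal of $\overline{\mu}$ is $(1-\frac1n)$ times that of $\overline{\nu}$, hence at most $(1-\frac1n)a_i\le a_i$; for a low coordinate $i>m$ the marginal is exactly $a_i$. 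The masses sum to $(1-\frac1n)+\sum_{i>m}a_i+c=1$, and $c=\frac1n-\sum_{i>m}a_i\ge 0$ because there are at most $n$ low values, each below $\frac1{n^2}$, so $\sum_{i>m}a_i<\frac1n$; this also re-certifies the inequality $c>0$ used in the construction. Hence $\overline{\mu}$ is a feasible probability distribution.

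For the value, I would expand
\[
\sum_{S\subseteq[n]}\overline{\mu}_S\,V(S)=(1-\tfrac1n)\sum_{T\subseteq[m]}\overline{\nu}_T\,V(T)+\sum_{i>m}a_i\,V(\{i\})+c\,V(\emptyset),
\]
discard the last summand (nonnegative, since $V$ is monotone and $V(\emptyset)\ge 0$), and apply the hypothesis that $\overline{\nu}$ is an $\alpha$-approximation of \eqref{eq:max-m}, i.e.\ $\sum_{T\subseteq[m]}\overline{\nu}_T\,V(T)\ge\alpha\,V^+_m(a)$. This gives
\[
\sum_{S\subseteq[n]}\overline{\mu}_S\,V(S)\ \ge\ (1-\tfrac1n)\alpha\,V^+_m(a)+\sum_{i>m}a_i\,V(\{i\}).
\]

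To finish, I would relate the right-hand side to $V^+(a)$. Since $(1-\frac1n)\alpha\le 1$ and $\sum_{i>m}a_i V(\{i\})\ge0$, the second term dominates $(1-\frac1n)\alpha\sum_{i>m}a_i V(\{i\})$, so the lower bound is at least $(1-\frac1n)\alpha\bigl(V^+_m(a)+\sum_{i>m}a_i V(\{i\})\bigr)$. Now Lemma~\ref{lem:sub}, applied to an optimal distribution $\mu$ of \eqref{eq:maxi2} and its induced distribution $\nu(\mu)$ on $[m]$, gives $V^+(a)\le V^+_m(a)+\sum_{i>m}a_i V(\{i\})$; here one uses that $\nu(\mu)$ is feasible for \eqref{eq:max-m}, which holds because $\sum_{T\ni i}\nu(\mu)_T=\sum_{S\ni i}\mu_S\le a_i$ for every $i\le m$, and that $\nu(\mu)$ is a probability measure. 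Chaining the two displayed bounds then yields $\sum_{S}\overline{\mu}_S V(S)\ge(1-\frac1n)\alpha\,V^+(a)$, as required.

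I do not expect a serious obstacle: the real content is the construction of $\overline{\mu}$, which deliberately ``wastes'' a $\frac1n$ fraction of probability mass so that the low coordinates can be served exactly by singletons while the high coordinates inherit a feasibly scaled-down copy of $\overline{\nu}$. The only points needing care are (i) not double counting the empty set, which picks up mass both as the set $T=\emptyset\subseteq[m]$ and through the corrective term $c$, and (ii) invoking Lemma~\ref{lem:sub} with a distribution $\nu(\mu)$ that genuinely satisfies the marginal constraints of \eqref{eq:max-m}.
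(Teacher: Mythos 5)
Your proof is correct and follows essentially the same chain of inequalities as the paper's: expand the objective of $\overline{\mu}$, apply the $\alpha$-approximation hypothesis, pull $(1-\frac1n)\alpha\le 1$ through the singleton sum, and invoke Lemma~\ref{lem:sub} to bound $V^+(a)\le V^+_m(a)+\sum_{i>m}a_iV(\{i\})$. The only differences are cosmetic: you additionally verify feasibility of $\overline{\mu}$ and carefully account for the empty set's mass $(1-\frac1n)\overline{\nu}_\emptyset + c$, whereas the paper states its first line as an equality, implicitly taking $V(\emptyset)=0$ (your version, written as an inequality after discarding $c\,V(\emptyset)\ge 0$, is the safer formulation).
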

\begin{proof}
\begin{align*}
\sum_{S\subset [n]} \overline{\mu}_S V(S) &= \left(1-\frac{1}{n} \right)\sum_{T\subset [m]} \overline{\nu}_T V(T) + \sum_{i>m} a_i V(\{i\})\\
&\geq \left(1-\frac{1}{n} \right)\alpha V^+_m(a) + \sum_{i>m} a_i V(\{i\}) \\
&\geq \left(1-\frac{1}{n} \right)\alpha [ V^+_m(a) + \sum_{i>m} a_i V(\{i\}) ] \\
&\geq \left(1-\frac{1}{n} \right)\alpha V^+(a)
\end{align*}
where the first equation is implied by the definition of $\overline{\mu}$. The second inequality follows from the fact that $\overline{\nu}$ $\alpha$-approximates the concave closure (on the set $[m]$). The third inequality is trivial and the last one follows from Lemma \ref{lem:sub}.  
\end{proof}
Lemma \ref{lem:n->m} reduces the original concave closure problem to the problem of computing the concave closure over $[m]$ where (unlike the original problem) we know that $a_i\geq \frac{1}{n^2}$ for each $i\in [m]$. In the remainder of the proof, we consider the later problem. The idea is to translate this problem into a discrete one. A natural way do to so is by rounding the underlying terms to integer multiples of a parameter $\delta:=\frac{1}{n^4(n+1)}$ and then working with the multiples, instead of the fractional terms. 

Since \eqref{eq:max-m} is a linear program (over variables $ \{\nu_T\}_{T \subseteq [m]}$) with at most $n+1$ non-trivial constraints, without loss of generality we can restrict attention to solutions that have support size of at most $n+1$. 

As mentioned previously, we set a grid of size $\delta:=\frac{1}{n^4(n+1)}$, and we consider the maximization problem of $V^+_m$ where we restrict the probabilities $\{\nu_S\}$ to be integer multiples of $\delta$.

\begin{align}
V^+_m(a) & := \max \sum_{T\subset [m]} \nu_T V(T) \label{eq:max-dis} \\
& \qquad \text{s.t.}  \  \sum_{T \subseteq [m] : T \ni i } \nu_T \leq a_i \qquad \forall  i \in [m] \nonumber \\
&  \qquad \qquad \nu \textrm{ is a probability measure.} \nonumber \\
& \qquad \qquad \nu_T \in \{0, \delta , 2\delta, \ldots,1\}.  \nonumber
\end{align}

\begin{lemma}\label{lem:dis}
Let parameter $ \alpha \in [1/2,1]$. If distribution $\widehat{\nu}$ is an $\alpha$-approximate solution of optimization problem \eqref{eq:max-dis} with support size at most $n+1$, then  $\widehat{\nu}$ is a $(1-\frac{2}{n})\alpha$-approximate solution of the concave closure $V^+_m(a)$ as well. 
\end{lemma}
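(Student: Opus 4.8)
The plan is to reduce the lemma to the single estimate that the optimal value of the discrete program \eqref{eq:max-dis} is at least $\left(1-\tfrac{2}{n}\right)V^+_m(a)$, where $V^+_m(a)$ is the optimum of the continuous program \eqref{eq:max-m}. Once this is in hand the lemma is immediate: \eqref{eq:max-dis} is exactly \eqref{eq:max-m} with the added constraint $\nu_T\in\{0,\delta,\dots,1\}$, so every feasible solution of \eqref{eq:max-dis} is feasible for \eqref{eq:max-m}, and an $\alpha$-approximate solution $\widehat{\nu}$ of \eqref{eq:max-dis} then obeys $\sum_T\widehat{\nu}_T V(T)\ge\alpha\left(1-\tfrac{2}{n}\right)V^+_m(a)$, i.e. it is a $\left(1-\tfrac{2}{n}\right)\alpha$-approximate solution of the concave closure $V^+_m(a)$. (I do not expect to need the hypotheses $\alpha\ge\tfrac12$ or $|\mathrm{supp}(\widehat{\nu})|\le n+1$ in this step; these are presumably recorded for the later reduction to submodular maximization over a matroid.)

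\textbf{Constructing a good grid solution.} To certify that estimate I would start from an optimal solution $\nu^*$ of \eqref{eq:max-m}; since \eqref{eq:max-m} is an LP with at most $n+1$ non-trivial constraints we may take $|\mathrm{supp}(\nu^*)|\le n+1$. Round every non-empty set down to the grid, $\nu'_T:=\delta\lfloor\nu^*_T/\delta\rfloor$ for $T\neq\emptyset$, and absorb the residual mass on the empty set, $\nu'_\emptyset:=1-\sum_{T\neq\emptyset}\nu'_T$. Then I would check feasibility of $\nu'$ for \eqref{eq:max-dis}: rounding down only decreases marginals, so $\sum_{T\ni i}\nu'_T\le\sum_{T\ni i}\nu^*_T\le a_i$ for $i\in[m]$; since $\sum_{T\neq\emptyset}\nu'_T\le\sum_{T\neq\emptyset}\nu^*_T\le 1$ we get $\nu'_\emptyset\ge\nu^*_\emptyset\ge0$, so $\nu'$ is a probability measure; and each $\nu'_T$ is an integer multiple of $\delta$ — clear for $T\neq\emptyset$, and $\nu'_\emptyset=1-\sum_{T\neq\emptyset}\nu'_T$ is a multiple of $\delta$ because $1=n^4(n+1)\,\delta$.

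\textbf{Controlling the loss.} Using $V(\emptyset)\ge0$, $\nu'_\emptyset\ge\nu^*_\emptyset$, and $\nu'_T\ge\nu^*_T-\delta$, I would bound
\[
\sum_{T}\nu'_T V(T)\ \ge\ \sum_T\nu^*_T V(T)-\delta\sum_{T\in\mathrm{supp}(\nu^*),\,T\neq\emptyset}V(T)\ \ge\ V^+_m(a)-(n+1)\delta\,V([m]),
\]
the decisive point being that every set occurring here is a subset of $[m]$ and hence has value at most $V([m])$ — bounding it by $V([n])$ would be too wasteful. With $\delta=\tfrac{1}{n^4(n+1)}$ the loss is at most $\tfrac{1}{n^4}V([m])$. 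Finally, the distribution placing mass $\tfrac{1}{n^2}$ on $[m]$ and the rest on $\emptyset$ is feasible for \eqref{eq:max-m} precisely because every high value satisfies $a_i\ge\tfrac{1}{n^2}$, so $V^+_m(a)\ge\tfrac{1}{n^2}V([m])$; hence the loss is at most $\tfrac{1}{n^2}V^+_m(a)\le\tfrac{2}{n}V^+_m(a)$, and $\nu'$ witnesses that the optimum of \eqref{eq:max-dis} is at least $\left(1-\tfrac{2}{n}\right)V^+_m(a)$. The one delicate step is this last comparison: the absolute rounding error must be measured against $V^+_m(a)$, and it is exactly the lower bound $a_i\ge 1/n^2$ on the high values (together with confinement of all relevant sets to $[m]$) that makes the error a vanishing fraction of the optimum.
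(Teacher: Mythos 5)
Your proof is correct and follows the same overall strategy as the paper: round an optimal basic solution of the continuous LP \eqref{eq:max-m} to the $\delta$-grid, absorb the residual on $\emptyset$, and bound the absolute rounding error against $V^+_m(a)$. The difference is in how that error is compared to $V^+_m(a)$. The paper first passes the loss through singletons via subadditivity --- $V([m])\le\sum_{i\in[m]}V(\{i\})$ and $\frac{1}{n^4}\le\frac{1}{n^2}a_i$ --- and then lower-bounds $V^+_m(a)\ge\frac{1}{n}\sum_{i\in[m]}a_iV(\{i\})$ by exhibiting the singleton distribution; the resulting relative loss is $1/n$, and the $\alpha\ge 1/2$ hypothesis enters because the paper forms the ratio against the objective value of an $\alpha$-approximate $\overline{\nu}$, not against $V^+_m(a)$ directly. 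You instead lower-bound $V^+_m(a)\ge\frac{1}{n^2}V([m])$ via the feasible distribution placing mass $1/n^2$ on $[m]$ (using only $a_i\ge 1/n^2$), which gives a tighter $1/n^2$ relative loss and avoids subadditivity entirely. Your reorganization --- isolating the single inequality $\mathrm{OPT}(\ref{eq:max-dis})\ge(1-\tfrac{2}{n})V^+_m(a)$ and then observing that any $\alpha$-approximate feasible point of \eqref{eq:max-dis} is automatically feasible for \eqref{eq:max-m} and hence $(1-\tfrac{2}{n})\alpha$-approximate for $V^+_m(a)$ --- is logically cleaner than the paper's presentation, and your observation that neither $\alpha\ge 1/2$ nor the support condition on $\widehat{\nu}$ is needed for the stated conclusion is correct: the support bound is only needed on the continuous optimum $\nu^*$ being rounded (which exists by LP basics), while the support condition on $\widehat{\nu}$ is recorded in the lemma because it is what is carried forward into the application to \eqref{eq:interim} and Lemma \ref{lem:n->m}.
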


\begin{proof}
We prove that by restricting attention to probabilities in the set $\{0, \delta , 2\delta, \ldots,1\}$, we incur at most a multiplicative loss of $(1-\frac{1}{n\alpha})$. Given a distribution $\nu$ with support size at most $n+1$ we round down the probabilities to integer multiples of $\delta$ (and put all the remaining probability mass on the empty set), we denote the resulting distribution by $\nu'$. Formally $\nu'_T= \ell \delta$ where $k=\max \{j\in \mathbb{Z}: j \delta \leq \nu_T\}$. Note that
\begin{align}
&\sum_{T} \nu_T V(T) - \sum_{T} \nu'_T V(T)=\sum_{T} (\nu_T-\nu'_T) V(T)
\leq \sum_{T} \delta V(T) \nonumber \\
&\leq \sum_{T\in \Supp(\nu)} \frac{1}{n^4(n+1)} V([m]) \leq \frac{1}{n^4}V([m])\leq \frac{1}{n^4}\sum_{i\in [m]} V(\{i\}) \leq \frac{1}{n^2}\sum_{i\in [m]}a_i V(\{i\}) \label{eq:round}.
\end{align}
Since $\nu_T-\nu'_T \leq \delta$ for each subset $T$, we get the first inequality. To obtain the second inequality we use the definition of $\delta:=  \frac{1}{n^4(n+1)}$ and the monotonicity of $V$. Now, given that the support size of $\nu$ is at most $n+1$, we have the third inequality. The fourth inequality follows from the submodularity (subadditivity) of $V$ and the last inequality follows from the fact that $a_i \geq \frac{1}{n^2}$. 

It is relevant to note that $V^+_m(a)\geq \frac{1}{n}\sum_{i\in [m]} a_i V(\{i\})$, since one feasible solution is to put a probability mass of $\frac{a_i}{n}$ on singleton $\{i\}$ for all $i \in [m]$, and assign the remaining probability mass to the empty set. This is indeed a feasible solution because $\sum_i \frac{a_i}{n} \leq \sum_i \frac{1}{n}\leq 1$.

Finally let $\overline{\nu}$ be an $\alpha$-approximation solution of $V^+_m(a)$, and let $\overline{\nu}'$ be the corresponding rounded-down distribution . Then 
\begin{align*}
\frac{\sum_{T} \overline{\nu}'_T V(T)}{\sum_{T} \overline{\nu}_T V(T)}
&=1-\frac{\sum_{T} \overline{\nu}_T V(T) - \sum_{T} \overline{\nu}'_T V(T)}{\sum_{T} \overline{\nu}_T V(T)}\\
&\geq 1-\frac{\frac{1}{n^2}\sum_{i\in [m]}a_i V(\{i\})}{\sum_{T} \overline{\nu}_T V(T)}\\
&\geq 1-\frac{\frac{1}{n^2}\sum_{i\in [m]}a_i V(\{i\})}{\alpha\frac{1}{n}\sum_{i\in [m]} a_i V(\{i\})} = 1-\frac{1}{n\alpha}
\end{align*}
where the second inequality follows from  bound \eqref{eq:round}, and the third from the fact that $V^+_m(a)\geq \frac{1}{n}\sum_{i\in [m]} a_i V(\{i\})$. Finally, the stated claim follows from the fact that $\alpha \geq 1/2$.
\end{proof}

Lemma \ref{lem:dis} allows us to restrict our attention to the discretized problem \eqref{eq:max-dis}. Note that problem \eqref{eq:max-dis} is in fact equivalent to

\begin{align}
\max_{S^1, \ldots, S^k \subseteq [m]}  & \ \ \ \frac{1}{k} \sum_{j=1}^k V(S^j) \label{prob} \\
\textrm{subject to } & \ \ \ | \{ j \in [k] \mid i \in S^j \} | \leq k_i \qquad \forall i \in [n] \nonumber
\end{align}
where $k_i=  a_i n^4(n+1) $ and $k = n^4(n+1)$. 

Below we show that \eqref{prob} admits a $(1- 1/e)$ approximation. We do so by showing that \eqref{prob} corresponds to the problem of maximizing a monotone submodular function subject to a matroid constraint. Note that a $(1- 1/e)$-approximate solution of \eqref{prob} can be efficiently mapped into a distribution $\hat{\nu}$ that  $(1-\frac{2}{n}) (1-1/e)$-approximates \eqref{eq:max-dis} and has support size $n+1$. This follows from the following observations: given a $(1- 1/e)$-approximate solution of  \eqref{prob},  $\mathcal{S}:= \{S^1, \ldots, S^k \}$, the uniform distribution (say, $\eta$) over $\mathcal{S}$ is a feasible solution of \eqref{eq:max-dis} that achieves an approximation ratio of $(1-1/e)$. We can now consider the following linear program that has the same marginal constraints as \eqref{eq:max-dis} but considers probability distributions whose support lie in $\mathcal{S}$:
\begin{align}
\max & \sum_{T \in \mathcal{S}} \nu_T V(T) \label{eq:interim} \\
\text{s.t.} &  \  \sum_{T \in \mathcal{S} : T \ni i } \nu_T \leq a_i \qquad \forall  i \in [m] \nonumber \\
& \  \sum_{T \in \mathcal{S}} \nu_T = 1 \textrm{ and } \nu_T \geq 0 \quad \forall T \in \mathcal{S} \nonumber 
\end{align}

Linear program \eqref{eq:interim} has $k=n^4(n+1)$ variables $n+1$ non-trivial constraints, hence in polynomial time we can find an optimal solution, say $x:=\{x_T\}_{T \in \mathcal{S}}$, of \eqref{eq:interim} with support size $n+1$. Furthermore, using $\eta$ (the uniform distribution over $\mathcal{S}$) we can show that the optimal value of \eqref{eq:interim} (i.e., $\sum_{T\in \mathcal{S}} x_T V(T) $) is at least $(1- 1/e)$ times the optimal value of \eqref{eq:max-dis}. Hence, by rounding the probabilities $x_T$s down to the nearest multiple of $\delta$, and using arguments similar to the ones presented in Lemma \ref{lem:dis} we can obtain a distribution $\hat{\nu}$ that satisfies the constraints of Lemma~\ref{lem:dis} (i.e., has support size of at most $n+1$) with $\alpha :=(1-\frac{2}{n}) (1-1/e)$.

To complete the proof we show that \eqref{prob} corresponds to the problem of maximizing a monotone submodular function subject to a matroid constraint. Consider base set ${U} = [m] \times [k]$. We get that the size of $U$ is polynomially bounded. 
For a subset  $R = \{ (i_1, j_1), (i_2, j_2), \ldots, (i_l, j_l) \}$ of $U$ and $j \in [k]$, write $R^j$ to denote the projected subset $\{i' \in [m] \mid (i', j) \in R  \}$.

With this notation in hand, define function $F$ for each subset $R = \{ (i_1, j_1), (i_2, j_2), \ldots, (i_l, j_l) \} \subset U$ as follows
\begin{align}
F(R) := \frac{1}{k} \sum_{j=1}^k V(R^j).
\end{align}

We claim that $F$ is submodular: consider subsets $X \subset Y \subset U $ and element $(i,j) \in U$. Note that $F(X + (i,j)) - F(X) = \frac{1}{k} V(X^j + i) - \frac{1}{k} V(X^j)$ and $F(Y+(i,j)) - F(Y) = \frac{1}{k} V(Y^j + i) - \frac{1}{k} V(Y^j)$. Since $X^j \subset Y^j$, the submodulartiy (monotonicity) of $V$ implies the submodularity (monotonicity) of $F$.

Next we consider a partition matroid $\mathcal{M}$ over $U$. Specifically, we say that a subset $R \subset U$ is independent (with respect to the matroid $\mathcal{M}$) iff $| \{ (i',j') \in R \mid i'=i \} | \leq k_i$ for all $i \in [n]$. Note that this is a partition matroid where the disjoint partitions are $B_i := \{(i,1), (i,2),\ldots, (i,k) \}$ and the cardinality bounds are $k_i$s. In other words, we obtain $\mathcal{M}$ by defining $R$ to be an independent subset iff $|R \cap B_i| \leq k_i $ for all $i$.

Note that if a subset $R \subset U$ is independent then $R^1, R^2, \ldots R^k$ satisfy the constraints of the optimization problem (\ref{prob}), i.e., for an independent $R$ we have $|\{ j \in [k] \mid i \in S^j \} | \leq k_i$ for all $i$. 

Overall, we get that optimization problem (\ref{prob}) is equivalent to the following problem:
\begin{align}
\max_{R \subset U} & \ \ \ F(R) \\
\textrm{subject to } & \ \ \ R \in \mathcal{M} \nonumber
\end{align}

Since this is a submodular maximization problem subject to a matroid constraint, it admits a $(1 - \frac{1}{e})$ approximation; see~\cite{ccpv}. This in turn implies that the original problem admits a $(1-\frac{1}{n})(1-\frac{2}{n})^2 (1 - \frac{1}{e})=(1-\frac{1}{e}-O(\frac{1}{n}))$ approximation. We can set parameters such instead of a multiplicative factor of $(1-\frac{1}{n})(1-\frac{2}{ n})^2$ in the approximation we get a term that is arbitrarily close one. Hence, we get the desired result.

\subsection{Hardness of Approximating the Concave Closure}
This section shows that the $\left( 1 - \frac{1}{e} \right)$ approximation guarantee obtained in Section~\ref{sect:conclose} is tight. In particular, applying the machinery developed by~\cite{klmm} leads us to the following theorem. We note that \cite{klmm} establish the hardness of approximating maximum social welfare in combinatorial auctions and similar tools were developed in~\cite{feige} for studying the inapproximability of the domatic number. 

\begin{theorem}
\label{thm:hard}
Given a monotone, submodular fucntion $V:2^{[n]} \rightarrow \mathbb{R}_+$ and vector $a \in [0,1]^n$, for any $\varepsilon >0$, it is {\rm NP}-hard to approximate the concave closure, $V^+(a)$, by a factor better than $\left( 1 - \frac{1}{e} - \varepsilon \right)$. 
\end{theorem}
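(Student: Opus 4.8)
The plan is an approximation-preserving reduction from the gap problem underlying the $\left(1-\frac1e\right)$-inapproximability of maximum coverage and of submodular welfare maximization. Recall that \cite{feige} and \cite{klmm} construct, from multi-prover proof systems and partition systems, families of monotone submodular valuations together with two cases that are {\rm NP}-hard to distinguish: a \emph{completeness} case in which a (near-)perfect solution exists, and a \emph{soundness} case in which every solution is worth at most a $\left(1-\frac1e+\varepsilon\right)$ fraction of the completeness value. I would feed such a valuation $V:2^{[n]}\rightarrow\mathbb{R}_+$ (e.g.\ the coverage function of the underlying set system) into the concave-closure problem at a uniform point $a=\left(\tfrac1k,\dots,\tfrac1k\right)$, where $k$ is the number of identical ``bidders'' (or the coverage budget) in the construction, scaled so that the marginal constraint $\sum_{S\ni i}\mu_S\le\tfrac1k$ encodes the intended budget.

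The engine of the reduction is the identity $V^+\!\left(\tfrac1k,\dots,\tfrac1k\right)=\tfrac1k\cdot\mathrm{OPT}_{\mathrm{LP}}$, where $\mathrm{OPT}_{\mathrm{LP}}$ is the value of the configuration linear program for submodular welfare with $k$ identical bidders each of valuation $V$. This follows by symmetrization: averaging any feasible LP solution over the $k$ bidders produces a solution in which all bidders are assigned the same marginal distribution $\{\mu_S\}$, which is feasible for \eqref{eq:maxi2} at $a=\left(\tfrac1k,\dots,\tfrac1k\right)$ and has welfare $k\sum_S\mu_S V(S)$; conversely, any feasible $\mu$ for \eqref{eq:maxi2} lifts to a feasible symmetric LP solution of the same per-bidder value. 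Consequently, a polynomial-time algorithm approximating $V^+(a)$ within a factor $\rho$ yields a $\rho$-approximation of $\mathrm{OPT}_{\mathrm{LP}}$: in the completeness case a $\rho$-fraction of the planted value is certified, while in the soundness case the value is small, so the two cases are distinguished as soon as $\rho>1-\frac1e$.

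The delicate point, and the one that genuinely requires the machinery of \cite{klmm,feige} rather than a black-box citation of the \emph{integral} hardness of coverage/welfare, is that the soundness bound must hold at the LP / concave-closure level, not merely for integral solutions: one must show that in the soundness case no distribution over subsets of $[n]$ -- in particular none placing mass on large subsets -- can exceed $\left(1-\frac1e+\varepsilon\right)$ times the completeness value. By the sub-additivity of $V$ (the same observation behind Lemma~\ref{lem:sub}) I would first argue that mass sitting on large subsets may be redistributed onto their singletons without decreasing the objective, reducing the question to fractional selections of few subsets; it then remains to certify that the \emph{fractional} coverage of the soundness instances is capped at $\left(1-\frac1e+\varepsilon\right)$, which is exactly what the expansion/smoothness of the label-cover instances in \cite{klmm} (and of the partition systems in \cite{feige}) delivers. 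This LP-level soundness is where I expect the real work to be; everything else is bookkeeping.

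Putting the pieces together: for every $\varepsilon>0$, a polynomial-time algorithm approximating $V^+(a)$ within $1-\frac1e+\varepsilon$ would decide an {\rm NP}-hard promise problem, which proves Theorem~\ref{thm:hard}. Combined with the $\left(1-\frac1e-\varepsilon\right)$-approximation algorithm of Section~\ref{sect:conclose}, this pins the optimal approximation ratio for the concave closure of monotone submodular functions at $1-\frac1e$; and by the approximation-preserving reduction of Lemma~\ref{lem:aprox}, the same $\left(1-\frac1e\right)$ threshold governs the computation of a revenue-maximizing signaling policy for a sender with monotone submodular utility.
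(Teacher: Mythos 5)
Your reduction is essentially the paper's: take the \cite{klmm} construction (all $m$ agents sharing one monotone submodular valuation $f$, derived from a label-cover gap instance), evaluate the concave closure at the uniform point $a=(\tfrac1m,\dots,\tfrac1m)$, and transfer the $(1-\tfrac1e)$ gap. Your symmetrization identity $V^+(\tfrac1m,\dots,\tfrac1m)=\tfrac1m\,\mathrm{OPT}_{\mathrm{LP}}$ for the $m$-bidder configuration LP is a clean and correct way to make explicit the equivalence the paper exploits. You also put your finger on the right crux: \cite{klmm}'s hardness is stated for \emph{integral} allocations, while here the soundness bound must hold for every feasible $\mu$ in \eqref{eq:maxi2}, i.e.\ at the fractional level, and that does not follow black-box from the integral statement.

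Where the proposal breaks is the one concrete step you offer to bridge that gap. Redistributing the mass $\mu_S$ from a large subset $S$ onto its singletons $\{i\}$, $i\in S$, (i) by subadditivity makes the objective go \emph{up}, and (ii) destroys the constraint $\sum_T\mu_T=1$: the total mass becomes $\sum_T\mu_T+(|S|-1)\mu_S>1$, so you have left the feasible region of \eqref{eq:maxi2}. The result is a strict relaxation whose optimum can greatly exceed the quantity you wish to bound (once $\sum_T\mu_T=1$ is dropped, one can always achieve $\sum_i a_iV(\{i\})$, which is not controlled by the soundness value of the instance). So this move gives an upper bound in the wrong direction and does not reduce the soundness question to ``fractional selections of few subsets''.

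What the paper does for the fractional soundness---also only at sketch level---is to redo \cite{klmm}'s Lemmas 4 and 5 directly, replacing cardinalities of sets of players by $\mu^*$-measures of the subsets in the support of the optimal $\mu^*$, and replacing the convexity step bounding the gap $\Delta_e$ by Jensen's inequality under $\mu^*$. That is the work your proposal still owes; the subadditivity shortcut does not substitute for it.
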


\begin{proof}[Sketch]
\cite{klmm} study the combinatorial auction problem where $n$ goods have to be partitioned among $m$ agents whose utilities are submodular functions of the goods assigned to them. In this problem, the objective is to maximize social welfare, i.e., the sum of the utilities of the agents. It is shown in~\cite{klmm} that for this problem and any $\varepsilon >0$ there does not exist a polynomial time algorithm that obtains an approximation ratio better than $\left( 1 - \frac{1}{e} - \varepsilon \right)$, unless {\rm P} $=$ {\rm NP}. 

Specifically,~\cite{klmm} start with a \emph{label-cover} problem where it is {\rm NP}-hard to distinguish whether the optimal value, $OPT(L)$, is one or less than a particular constant, $c<1$. From the given label cover problem they construct a combinatorial auction instance, $I$, wherein the maximum social welfare, $OPT(I)$ is greater than a threshold, $\tau$ if the label cover problem admits a solution of value one. Furthermore, if the optimal value of the label cover problem is less than $c$---i.e., $OPT(L)  \leq c$---then it must be the case that $OPT(I) \leq  \left( 1 - \frac{1}{e} - \varepsilon \right) \tau$. This, overall, establishes a $\left( 1 - \frac{1}{e} - \varepsilon \right)$ hardness-of-approximation bound for the combinatorial auction problem. 

Interestingly, in the constructed instance $I$ all of the $m$ agents have the same monotone, submodular utility function, say, $f :2^{[n]} \rightarrow \mathbb{R}_+ $. We claim that  approximating the concave closure of constructed function $f$ at marginal vector $a :=(\frac{1}{m}, \frac{1}{m}, \ldots, \frac{1}{m})$ by a factor better than $\left( 1 - \frac{1}{e} - \varepsilon \right)$ is {\rm NP}-hard. In particular, if $OPT(L) =1$  then $f^+(a) \geq \tau'$ and, moreover, if $OPT(L) \leq c$ then $f^+(a) \leq \left( 1 - \frac{1}{e} - \varepsilon \right) \tau'$; here $\tau'$ is a fixed parameter.

The proof of this claim can be obtained by considering the subsets in the support of an optimal solution, $\mu^*$, of problem \eqref{eq:maxi2} defined for function $f$. Note that the proof given in~\cite{klmm} proceeds by considering the subsets that constitute the partition of goods among agents in $I$, instead we can focus on subsets in the support of $\mu^*$ to obtain the result for the concave closure. In particular, the arguments presented in~\cite{klmm} go through if, instead of cardinalities, we consider measure of sets and expected values of quantities with respected to $\mu^*$.\footnote{For example, in Lemma 5 in~\cite{klmm}, we can redefine sets $N_1^e$ ($N_2^e$) to be the collection of subsets---instead of collection of players---in the support of $\mu^*$ that cover (do not cover) an edge $e$ in the label cover instance. Along these lines, instead of bounding the cardinalities of $N_1^e$ and $N_2^e$ (which are denoted by $n_1^e$ and $n_2^e$ in ~\cite{klmm}), we can bound the measures $\sum_{S \in N_1^e} \mu^*_S$ and $\sum_{S \in N_2^e} \mu^*_S$. The key Lemmas 4 and 5 of~\cite{klmm} have analogous versions in terms of measures. A central step in Lemma 5 is to bound the gap $\Delta_e$ (see page 7 in~\cite{klmm}), for us this step follows via Jensen's inequality.} This, overall, establishes the desired inapproximability result for the concave closure. 
\end{proof}

\section*{Acknowledgements}
The authors thank Uriel Feige for helpful discussions and references.

\bibliographystyle{plain}
\bibliography{pbp}


\end{document}